\documentclass{sig-alternate}
\makeatletter
\newif\if@borderstar
\def\bordermatrix{\@ifnextchar*{%
  \@borderstartrue\@bordermatrix@i}{\@borderstarfalse\@bordermatrix@i*}%
}
\def\@bordermatrix@i*{\@ifnextchar[{%
  \@bordermatrix@ii}{\@bordermatrix@ii[()]}
}
\def\@bordermatrix@ii[#1]#2{%
  \begingroup
    \m@th\@tempdima8.75\p@\setbox\z@\vbox{%
      \def\cr{\crcr\noalign{\kern 2\p@\global\let\cr\endline }}%
      \ialign {$##$\hfil\kern 2\p@\kern\@tempdima & \thinspace %
      \hfil $##$\hfil && \quad\hfil $##$\hfil\crcr\omit\strut %
      \hfil\crcr\noalign{\kern -\baselineskip}#2\crcr\omit %
      \strut\cr}}%
    \setbox\tw@\vbox{\unvcopy\z@\global\setbox\@ne\lastbox}%
    \setbox\tw@\hbox{\unhbox\@ne\unskip\global\setbox\@ne\lastbox}%
    \setbox\tw@\hbox{%
      $\kern\wd\@ne\kern -\@tempdima\left\@firstoftwo#1%
        \if@borderstar\kern2pt\else\kern -\wd\@ne\fi%
      \global\setbox\@ne\vbox{\box\@ne\if@borderstar\else\kern 2\p@\fi}%
      \vcenter{\if@borderstar\else\kern -\ht\@ne\fi%
        \unvbox\z@\kern-\if@borderstar2\fi\baselineskip}%
        \if@borderstar\kern-2\@tempdima\kern2\p@\else\,\fi\right\@secondoftwo#1 $%
    }\null \;\vbox{\kern\ht\@ne\box\tw@}%
  \endgroup
}
\makeatother

\usepackage[naturalnames]{hyperref} 
\usepackage{enumerate}
\usepackage{mathrsfs}

\newtheorem{theorem}{Theorem}[section]
\newtheorem{lemma}[theorem]{Lemma}
\newtheorem{proposition}[theorem]{Proposition}

\newtheorem{Definition}[theorem]{Definition}
\newtheorem{Example}[theorem]{Example}

\newenvironment{example}{\begin{Example}\em}%
        {\hfill$\Box$\smallskip\end{Example}}

\usepackage{color}
        {\hfill$\Box$\smallskip\end{trivlist}\vspace*{-.6cm}}

\def\Maple{\textsc{Maple}}
\def\kk{{\mathbb K}}
 \def\NN{{\mathbb N}} 
 \def\RR{{\mathbb K}} 
\def\DD{{\mathcal D}} \def\ds{\displaystyle}  \def\bm{\boldsymbol}
\def\Bc{{\mathcal B}} 
\def\Qc{{\mathcal Q}} 
\def\II{{\mathcal I}} \def\ee{{\mathcal \varepsilon}}
 
\def\db{{\bm d}} 
\def\pb{{\bm \partial}} 
\def\fb{{\bm f}} 
\def\gb{{\bm g}} 
\def\xb{{\bm x}} 
\def\sint{\begingroup\textstyle\int\endgroup} 
\DeclareMathAlphabet{\mathpzc}{OT1}{pzc}{m}{it}
\def\<{\langle}
\def\>{\rangle}
\def\DDD{{\mathscr D}}
\newcommand{\pp}[1]{\frac{d}{d #1}} 
\newcommand{\id}[1]{\ensuremath{\langle #1 \rangle}}

\newcommand{\dual}[1]{{#1}^{*}}

\title{ 
Deflation and Certified Isolation of Singular 
Zeros of Polynomial Systems
}

\numberofauthors{1} \author{ \alignauthor Angelos
  Mantzaflaris and Bernard Mourrain\\
 \affaddr{GALAAD, INRIA M\'editerran\'ee}\\
 \affaddr{BP 93, 06902 Sophia Antipolis, France}\\
 \email{[FirstName.LastName]@inria.fr}
}

\begin{document}

\maketitle

\begin{abstract}
  We develop a new symbolic-numeric algorithm for the certification of
  singular isolated points, using their associated local ring
  structure and certified numerical computations. An improvement of an
  existing method to compute inverse systems is presented, which
  avoids redundant computation and reduces the size of the
  intermediate linear systems to solve. We derive a one-step
  deflation technique, from the des\-crip\-tion of the multiplicity
  structure in terms of dif\-fe\-ren\-tials. The deflated system can
  be used in Newton-based iterative schemes with quadratic
  convergence.  Starting from a polynomial system and a small-enough
  neighborhood, we obtain a criterion for the existence and uniqueness
  of a singular root of a given multiplicity structure, applying a
  well-chosen symbolic perturbation.  Standard verification methods,
  based eg. on interval arithmetic and a fixed point theorem, are
  employed to certify that there exists a unique perturbed system with
  a singular root in the domain.   Ap\-pli\-ca\-tions to topological
  degree computation and to the analysis of real branches of an
  implicit curve illustrate the method.
\end{abstract}

\vspace{-.3cm}
\category{G.1.5}{Mathematics of Computing}{Roots of Nonlinear Equations}
\category{I.1.2}{Computing Methodologies}{Symbolic and Algebraic
  Manipulation}[Algebraic algorithms] 
\vspace{-.3cm}
\terms{Algorithms, Theory} 
\vspace{-.3cm} 
\keywords{root deflation,
  multiplicity structure, dual space, inverse system, isolated point}

\section{Introduction}\label{introduction}
A main challenge in algebraic and geometric computing is singular point
identification and treatment. Such problems naturally occur when computing
the topology of implicit curves or surfaces \cite{amw2008}, the intersection of
parametric surfaces in geometric modeling. When algebraic representations are
used, this reduces to solving polynomial systems. Several approaches are
available: algebraic techniques such as  Gr\"obner bases or border bases, resultants, 
subdivision algorithms\cite{mmt09}, \cite{mp09}, homotopies, and so on.
At the end of the day, a numerical
approximation or a box of isolation is usually computed to characterize the
(real) roots of the polynomial system. But we often need to improve the
numerical approximation of the roots. Numerical methods such that Newton's
iteration can be used to improve the quality of the approximation, provided
that we have a simple root. In the presence of a multiple root, the
difficulties are significantly increasing. The numerical approximation can be
of very bad quality, and the methods used to compute this approximation are
converging slowly (or not converging). The situation in practical problems, as encountered in
CAGD for instance, is even worse, since the coefficients of the input
equations are known, with some incertitude. Computing multiple roots of
approximate polynomial systems seems to be an ill-posed problem, since
changing slightly the coefficients may transform a multiple root into
a cluster of simple roots (or even make it disappear).

To tackle this difficult problem, we adopt the following strategy. 
We try to find a (small) perturbation of the input system such that the root
we compute is an exact multiple root of this perturbed system.
In this way, we identify the multiplicity structure and we are able to 
setup deflation techniques which restore the quadratic convergence of the
Newton system. The certification of the multiple root is also possible on the 
symbolically perturbed system by applying a fixed point theorem, based
eg. on interval arithmetic \cite{RG10} or $\alpha$-theorems \cite{ShubSmale93}.

\smallskip\noindent{}\textbf{Related work.}\label{related}
In order to be develop Newton-type methods that converge to multiple roots, 
deflation techniques which consist in adding new
equations in order to reduce the multiplicity have already been
considered. In \cite{Ojika1983463}, by applying a triangulation 
preprocessing step on the Jacobian matrix at the approximate root, minors of
the Jacobian matrix are added to the system to reduce the multiplicity.

In~\cite{Lecerf02}, a presentation of the ideal in a triangular form in a
good position and derivations with respect to the leading variables are used
to iteratively reduce the multiplicity. This process is applied for p-adic
lifting with exact computation.  

In~\cite{lvz06,lvz08}, instead of triangulating the Jacobian matrix, the
number of variables is doubled and new equations are introduced, which are
linear in the new variables. They des\-cribe the kernel of the Jacobian
matrix at the multiple root.

In~\cite{zeng05}, this construction in related to the construction of the
inverse system. The dialytic method of F.S. Macaulay \cite{mac1916} is
revisited for this purpose. 
These deflation methods are applied iteratively until the root becomes
regular, doubling each time the number of variables.

More recent algorithms for the construction of inverse systems are described
eg. in \cite{Marinari:1995:GDM:220346.220368}, reducing the size of the
intermediate linear systems (and exploited in
\cite{Stetter:1996:AZC:236869.236919}), or in \cite{Mourrain97} using an
integration method.

In~\cite{Pope2009606}, a minimization approach is used to reduce the value of
the equations and their derivatives at the approximate root, assuming a basis
of the inverse system is known. 

In \cite{Wu:2008:CMS:1390768.1390812}, the inverse system is constructed via
Macaulay's method; tables of multiplications are deduced and their
eigenvalues are used to improve the approximated root. It is
proved that the convergence is quadratic when the Jacobian has corank 1 at the 
multiple root.

Verification of multiple roots of (approximate) polynomial equations is a
difficult task. The approach proposed in \cite{RG10} consists in introducing
perturbation parameters and to certifying the multiple root of nearby system
by using a fixed point theorem, based on interval arithmetic. It applies
only to cases where the Jacobian has corank equal to 1.

\smallskip\noindent{}\textbf{The univariate case.} 
In preparation for the multivariate case, we review some techniques
used to treat singular zeros of univariate polynomials, and we present
our method on a univariate instance.

Let $g(x)\in \RR[x]$ be a polynomial which attains at $x=0$ a root of
multiplicity $\mu > 1$. The latter is defined as the smallest positive
integer $\mu$ such that $d^\mu g(0) \ne 0 $ whereas $ g(0)= d
g(0)=\cdots= d^{\mu -1}g(0)=0$.  Here we denote $d^k g(x)=
\frac{d^k}{d x^k} g(x) /{k!} $ the normalized $k-$th order
derivative.

We see that $\DDD_0= \langle 1, d^1, \dots, d^{\mu -1}
\rangle$ is the maximal space of differentials which is stable under derivation, that vanish when
applied to members of $\Qc_0$, the $\id{x}-$primary component of
$\id{g}$ at $x=0$.
Consider now the symbolically perturbed equation
\begin{equation}\label{eq:univsys}
f_1(x,\bm \ee) = g(x) + \ee_1 + \ee_2 x + \cdots + \ee_{\mu-2} x^{\mu-2}
\end{equation}
and apply every basis element of $\DD_0$ to arrive to the new system
$
\bm f(x,\bm\ee)= \Big(f_1, d_1 f_1, \dots,d^{\mu-1}f_1 \Big)
$
in $\mu -1 $ variables. The $i-$th equation shall be $ f_i= d^{i-1}
f_1 = d^{i-1} g + \sum_{k=i}^{\mu-2} x^{k-i+1}\ee_k $, i.e linear in
$\bm \ee$, the last one being $f_{\mu}= d^{\mu-1} g(x)$.
This system deflates the root, as we see that the
determinant of it's Jacobian matrix at $(0,\bm 0)$ is
$$
\det J_{\bm f}(0,\bm 0)=
\left|
\begin{array}{c|c}
\begin{array}{c}
\pp{x} f_1 \\
\vdots\\
\pp{x} f_{\mu-1}
\end{array} & 
\begin{array}{ccc}
1& & 0 \\
& \ddots & \\
0 & & 1 
\end{array}
\\ \hline
\pp{x}f_{\mu} & 0 
\end{array}
\right| \begin{array}{l}
\ \\
= - \mu   d f_\mu(0,\bm 0)\\
= - \mu  d^\mu g(0)\ne 0 .
\end{array}
$$
Now suppose that $\zeta^*$ is an approximate zero, close to $x=\zeta$. We
can still compute $\DD_{\zeta}$ by evaluating $g(x)$ and the derivatives
up to a threshold relative to the error in $\zeta^*$. Then we can
form~\eqref{eq:univsys} and use verification techniques to certify the
root. Checking that the Newton operator is contracting shows the existence
and unicity of a multiple root in a neighborhood of the input data. 
 We are going to extend this approach, described in \cite{RG10},
to multi-dimensional isolated multiple roots.

\smallskip\noindent{}\textbf{Our approach.} It consists of the following steps:

(a) Compute a basis for the dual space and of the local quotient ring at a
 given (approximate) singular point.

(b) Deflate the system by augmenting it with new equations derived from the
 dual basis, introducing adequate perturbation terms.

(c) Certify the singular point and its multiplicity structure for the perturbed
  system checking the contraction property of Newton iteration (eg. via interval arithmetic).
\smallskip

In step (a), a dual basis at the singular point is computed by means of
linear algebra, based on the integration approach of \cite{Mourrain97}. We describe an
improvement of this method, which yields directly a triangular dual basis with no
redundant computation. This method has the advantage to reduce significantly
the size of the linear systems to solve at each step, compared to Macaulay's
type methods \cite{mac1916,lvz06,lvz08,zeng05}.  In the case of an approximate singular point, errors
are introduced in the coefficients of the basis elements. Yet a successful
computation is feasible. In particular, the support of the basis elements is
revealed by this approximate process. 

In the deflation step (b), new equations and new variables are introduced in
order to arrive to a new polynomial system where the singularity is
obviated. The new variables correspond to perturbations of the initial
equations along specific polynomials, which form a dual counterpart to the basis of
the dual space. One of the deflated systems that we compute from the dual system is a
square $n \times n$ system with a simple root. This improves the deflation
techniques described in \cite{lvz06,lvz08,zeng05}, which require additional
variables and possibly several deflation steps.  New variables are introduced
only in the case where we want to certify the multiplicity structure.  The
perturbation techniques that we use extend the approach of \cite{RG10} to
general cases where the co-rank of the Jacobian matrix could be bigger than
one.
The verification step (c) is mostly a contraction condition, using
eg. techniques as in \cite{RG10}. This step acts on the (approximate) deflated system, since
verifying a simple solution of the deflated system induces a
certificate of an exact singular point of (a nearby to) the initial system.

We are going to detail the different steps in the following sections,
starting with notations in Sect. \ref{results}, dual basis in
Sect.~\ref{sec:dual}, deflation in Sect.~\ref{sec:deflate}, and
certification in Sect.~\ref{sec:verify}.  In the last section, we will show
examples and applications to the topology analysis of curves.


\section{Preliminaries and main results}\label{results}

We denote by $R= \RR[\bm x]$ a polynomial ring over the field $\RR$ of
characteristic zero.  Also, the \emph{dual ring} $\dual{R}$ is the
space of linear functionals $\Lambda: R \to \RR$. It is commonly
identified as the space of formal series $\RR[[\pb]]$ where
$\pb=(\partial_1,\dots, \partial_n)$ are formal variables. 
Thus we view dual elements as formal series in differential
operators at a point $\bm \zeta\in\RR^{n}$. 
 To specify that we use the point
$\bm\zeta$, we also denote these differentials $\pb_{\bm\zeta}$. When applying $\Lambda(\pb_{\bm \zeta})\in
\RR[[\pb_{\bm \zeta}]]$ to a polynomial $g(\bm x)\in R$ we will denote by
 $\Lambda^{\bm \zeta}[g]=
 \Lambda^{\bm \zeta} g= 
 \Lambda(\pb_{\bm\zeta})[g(\xb)]$ the operation
\begin{align}\label{eq:defdual}
  \Lambda^{\bm \zeta}[g]= \sum_{\bm \alpha\in\NN^n}
  \frac{ \lambda_{\bm \alpha}}{ \alpha_1!\cdots \alpha_n! } \cdot
  \frac{d^{|\bm \alpha|} g}{d_1^{\alpha_1}\cdots d_n^{\alpha_n}} (\bm \zeta) ,
\end{align}
for $\Lambda(\pb_{\zeta})= \sum \lambda_{\bm \alpha} \bm {1\over \bm\alpha !}\pb_{\zeta}^{\bm \alpha} \in
\RR[[\pb_{\zeta}]]$. 
Extending this definition to an ordered set $\DD=(\Lambda_1,\dots,\Lambda_\mu)\in\kk[[\pb]]^{\mu}$, we
shall denote $\DD^{\bm \zeta}[g] = \bm(\Lambda_1^{\bm \zeta}g,\dots,\Lambda_\mu^{\bm \zeta}
g) $.
In some cases, it is convenient to use normalized
differentials instead of $\pb$: for any $\alpha\in \NN^{n}$, we denote 
${\db}_{\bm \zeta}^{\bm \alpha}={1\over {\bm \alpha} !} \,\pb_{\bm \zeta}^{\bm \alpha}
$.
When $\bm\zeta=\bm 0$, we have $\db_{\bm 0}^{\bm \alpha} \bm x^{\bm \beta} =
1$ if $\bm \alpha=\bm \beta$ and $0$ otherwise.
More generally, $(\db_{\bm\zeta}^{\bm\alpha})_{\bm\alpha\in \NN^n}$ is the dual basis of
$((\xb-\bm\zeta)^{\bm\alpha})_{\bm\alpha\in \NN^n}$.

For $\Lambda\in R^{*}$ and $p\in R$, let $p\cdot \Lambda: q\mapsto
\Lambda(p\,q)$. We check that 
\begin{equation}\label{eq:derive}
  (x_{i}-\zeta_{i})\cdot {\bm\partial}_{\zeta}^{\bm \alpha} = \pp{\partial_{i,\zeta}} ({\bm
  \partial}_{\zeta}^{\bm \alpha}).
\end{equation}
This property shall be useful in the sequel.

\subsection{Isolated points and differentials}
Let $\id{\bm f}= \id{f_1,\dots,f_s}$ be an ideal of $R$, $\bm
\zeta\in \RR$ a root of $\bm f$ and $m_{\bm
  \zeta}=\id{x_1-\zeta_1,\dots,x_n-\zeta_n}$ the maximal ideal at
$\bm \zeta$. Suppose that $\bm \zeta$ is an isolated root of $\bm f$, then 
a minimal primary  decomposition of 
$\ds \II
= \bigcap_{ \mathcal Q\, \mathrm{prim.}\supset \II} \mathcal Q
$
contains a primary component $\mathcal Q_{\bm\zeta}$ such that
$\sqrt{\Qc_{\bm \zeta}}=m_{\bm  \zeta}$ and $\sqrt{\Qc'}
\not\subset m_{\bm  \zeta}$ for the other primary components $\mathcal Q'$
associated to $\II$ \cite{AtMa69}.

As $\sqrt{\Qc_{\zeta}}=m_{\bm \zeta}$, $R/\Qc_{\bm \zeta}$ is a finite
dimensional vector space. The multiplicity $\mu_{\bm\zeta}$ of
$\bm\zeta$ is defined as the dimension of $R/\Qc_{\bm \zeta}$.  A
point of multiplicity one is called regular point, or simple root,
otherwise we say that $\bm \zeta$ is a singular isolated point, or
multiple root of $\bm f$.  In the latter case we have 
$J_{\bm f}({\bm\zeta})= 0$.

We can now define the dual space of an isolated point.
\begin{Definition}\label{def:ds}
The \emph{dual space} of $\II$ is the subspace of elements of $\RR[[\pb_{\zeta}]]$ 
that vanish on all the elements of $\II$. It is also called the
orthogonal of $\II$ and denoted by $\II^{\perp}$.
\end{Definition}

Consider now the \emph{orthogonal} of $\Qc_{\bm \zeta}$, i.e. the
subspace $\DDD_{\bm \zeta}$ of elements of $\dual R$ that vanish on
members of $\Qc_{\bm \zeta}$, namely
$$
\Qc_{\bm \zeta}^{\perp}= \DDD_{\bm \zeta} =\{\Lambda\in\dual R\, : \, \Lambda^{\zeta} [p]=0,\, \forall p\in\Qc_{\zeta} \}.
$$
The following lemma is an essential property that allows extraction of
the local structure $\DDD_{\bm \zeta}$ directly from the ``global'' ideal
$\II=\id{\bm f}$, notably by matrix methods outlined in Sect.~\ref{sec:dual}.
\begin{proposition}[{\cite[Th.~8]{Mourrain97}}]\label{prop:diffpol}
  For any isolated point $\bm \zeta\in\RR$ of $\bm f$, we have 
$\id{\bm f}^{\perp} \cap \RR[\pb_{\bm \zeta}]=\DDD_{\bm \zeta}$ .
\end{proposition}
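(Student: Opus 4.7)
The plan is to establish the two inclusions separately, with the direction $\DDD_{\bm\zeta}\subseteq \id{\bm f}^{\perp}\cap\RR[\pb_{\bm\zeta}]$ being essentially bookkeeping and the reverse inclusion carrying the real content.

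For the easy direction, the primary decomposition yields $\id{\bm f}=\bigcap_{j}\Qc_{j}\subseteq \Qc_{\bm\zeta}$, so any $\Lambda$ annihilating $\Qc_{\bm\zeta}$ also annihilates $\id{\bm f}$, giving $\DDD_{\bm\zeta}\subseteq \id{\bm f}^{\perp}$. Moreover, since $\Qc_{\bm\zeta}$ is $m_{\bm\zeta}$-primary and $R/\Qc_{\bm\zeta}$ is finite dimensional, some power $m_{\bm\zeta}^{N}\subseteq \Qc_{\bm\zeta}$; any $\Lambda\in \DDD_{\bm\zeta}$ therefore vanishes on $m_{\bm\zeta}^{N}$ and, by the duality between $\db_{\bm\zeta}^{\bm\alpha}$ and $(\xb-\bm\zeta)^{\bm\alpha}$, must be a finite linear combination of $\db_{\bm\zeta}^{\bm\alpha}$ with $|\bm\alpha|<N$, i.e. an element of $\RR[\pb_{\bm\zeta}]$.

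For the nontrivial inclusion, I would fix $\Lambda\in \id{\bm f}^{\perp}\cap \RR[\pb_{\bm\zeta}]$ of total order at most $N$ and a test element $q\in \Qc_{\bm\zeta}$, and aim for $\Lambda[q]=0$. First I exploit the primary decomposition: for every $j\neq \bm\zeta$ the radical $\sqrt{\Qc_{j}}$ is not contained in $m_{\bm\zeta}$, so I can pick $h_{j}\in \sqrt{\Qc_{j}}\setminus m_{\bm\zeta}$ and set $h=\prod_{j\neq \bm\zeta} h_{j}$. Then $h\in \sqrt{\bigcap_{j\neq \bm\zeta}\Qc_{j}}$ while $h(\bm\zeta)\neq 0$, so a suitable power $h^{k}$ lies in $\bigcap_{j\neq \bm\zeta}\Qc_{j}$; combined with $q\in \Qc_{\bm\zeta}$ this gives $h^{k}q\in \id{\bm f}$ and hence $\Lambda[h^{k}q]=0$.

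Second, I invert $h^{k}$ locally. Because $h^{k}(\bm\zeta)\neq 0$, the class of $h^{k}$ is a unit in the finite-dimensional local ring $R/m_{\bm\zeta}^{N+1}$, so there exists $u\in R$ with $uh^{k}-1\in m_{\bm\zeta}^{N+1}$. Writing $q=uh^{k}q+(1-uh^{k})q$, the summand $uh^{k}q=u\cdot(h^{k}q)$ lies in $\id{\bm f}$ and is killed by $\Lambda$, while $(1-uh^{k})q\in m_{\bm\zeta}^{N+1}$; every functional in $\RR[\pb_{\bm\zeta}]$ of order $\leq N$ annihilates $m_{\bm\zeta}^{N+1}$, since all partial derivatives of order $\leq N$ of an element of $m_{\bm\zeta}^{N+1}$ vanish at $\bm\zeta$. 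Adding the two contributions gives $\Lambda[q]=0$, which is what we needed. The only real obstacle is correctly translating the order/polynomial condition into something compatible with the primary decomposition; once one recognizes that ``$\Lambda\in \RR[\pb_{\bm\zeta}]$'' is equivalent to $\Lambda$ factoring through some $R/m_{\bm\zeta}^{N+1}$, in which the non-$\bm\zeta$ primary components contribute invertible elements, everything else is routine algebraic manipulation based on \eqref{eq:defdual} and \eqref{eq:derive}.
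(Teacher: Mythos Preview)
Your argument is correct. The paper itself does not supply a proof of this proposition; it is stated with a citation to \cite[Th.~8]{Mourrain97} and then simply paraphrased. So there is no in-paper proof to compare against.

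That said, your route is the standard one and matches what one expects the cited result to rely on. The key ingredients---separating the $\bm\zeta$-primary component from the others via an element $h$ with $h(\bm\zeta)\neq 0$ and $h^{k}\in\bigcap_{j\neq\bm\zeta}\Qc_{j}$, then inverting $h^{k}$ modulo a high power of $m_{\bm\zeta}$---are exactly the localization mechanism that makes the global ideal $\id{\bm f}$ and the local primary component $\Qc_{\bm\zeta}$ indistinguishable to a polynomial functional centered at $\bm\zeta$. Two minor remarks: the finiteness of the primary decomposition (needed to form the finite product $h=\prod h_{j}$) is guaranteed because $R$ is Noetherian, and in step~5 you should perhaps state explicitly that $u\,(h^{k}q)\in\id{\bm f}$ because $\id{\bm f}$ is an ideal, not merely that $\Lambda[h^{k}q]=0$; you do this, but the phrasing ``$uh^{k}q=u\cdot(h^{k}q)$ lies in $\id{\bm f}$'' could be made one notch more explicit. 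Neither point is a gap.
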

In other words, we can identify
$\DDD_{\bm\zeta}=\Qc_{\bm\zeta}^{\perp}$ with the space of polynomial
differential operators that vanish at $\bm\zeta$ on every element
of $\II$.

The space $\DDD_{\bm\zeta}$ is a vector space of polynomials in $\pb_{\bm\zeta}$
dimension $\mu_{\bm\zeta}$, the multiplicity of $\bm\zeta$. As the
variables $(x_{i}-\zeta_{i})$ act on $R^{*}$ as derivations
(see \eqref{eq:derive}),
$\DDD_{\bm\zeta}$ is a space of differential polynomials in $\bm\partial_{\zeta}$, which is stable by
derivation. This property will be used explicitly in constructing $\DDD_{\bm\zeta}$  (Sec. \ref{sec:dual}).

\begin{Definition}
The \emph{nilindex} of $\Qc_{\bm \zeta}$ is the maximal
integer $N\in\NN$ s.t. $m_\zeta^N \not\subset \Qc_{\bm \zeta}$.  
\end{Definition}
It is directly seen that the maximal degree of an element of $\DDD_{\bm\zeta}$
has to be equal to $N$, also known as the \emph{depth} of $\DDD_{\bm\zeta}$.

\subsection{Quotient ring and dual structure}
In this section we explore the relation between the dual ring and the
quotient $R/\Qc_{\bm\zeta}$ where $\Qc_{\bm\zeta}$ is the primary component of the
isolated point $\bm\zeta$. We show how to extract a basis of this quotient ring
from the support of the elements of $\DDD_{\bm\zeta}$ and how $\DDD_{\bm\zeta}$ can be used to
reduce any polynomial modulo $\Qc_{\bm \zeta}$.

It is convenient in terms of notation to make the assumption $\bm
\zeta=\bm 0$. This saves some indices, while it poses no constraint (since
it implies a linear change of coordinates), and shall be adopted hereafter
and in the next section.

Let $\text{supp} \DDD_{\bm 0}$ be the set of exponents of monomials appearing
in $\DDD_{\bm 0}$, with a non-zero coefficient. These are of degree at most
$N$, the nilindex of $\Qc_{\bm 0}$.
Since $\forall \Lambda \in \DDD_{\bm 0}$, $\Lambda^{\bm 0} [p] =0$ iff $p\in\Qc_{\bm 0}$, we
derive that $\text{supp}\, \DDD_{\bm 0}= \{\bm \alpha \, : \, \bm x^{\bm
  \alpha}\notin \Qc_{\bm 0} \}$.  In particular, we can find a basis of
$R/\Qc_{\bm 0}$ between the monomials $\{ \bm x^{\bm \alpha}\, :\, \bm
\alpha\in\text{supp}\, \DDD \} $. 
This is a finite set of monomials, since their degree is bounded by the
nilindex of $\Qc_{\bm 0}$. 

Given a standard basis $\Bc= (\bm x^{\bm \beta_i})_{i=1,\dots,\mu}$ of
$R/ \Qc_{\bm 0}$ and, for all monomials $x^{\bm \gamma_j}\notin \Qc_{\bm 0},\,
j=1,\dots, s-\mu$, $s=\#\text{supp}\, \DDD_{\bm 0}$, with $x^{\bm
  \gamma_j}\notin \Bc$, the expression (normal form) of
\begin{align}
  x^{\bm \gamma_j} = \sum_{i=1}^{\mu} \lambda_{ij} \bm x^{\bm \beta_i} \mod \Qc_{\bm 0} 
\end{align}
in the basis $\Bc$ then the dual elements~\cite[Prop.~13]{Mourrain97}
\begin{align}\label{cbasis}
\Lambda_i =  \db^{\bm \beta_i} + \sum_{j=1}^{s-\mu} \lambda_{ij} \db^{\bm \gamma_j} ,
\end{align}
for $i=1,\dots, \mu$ form a basis of $\DDD_{\bm 0}$.
We give a proof of this fact in the following lemma.
\begin{lemma}\label{lem:nf}
  The set of elements $\DD=(\Lambda_i)_{i=1,\dots,\mu}$ is a basis of
  $\DDD_{\bm 0}$ and the normal form of any $g(\bm x)\in R$ with respect to
  the standard basis $\Bc=(\xb^{\bm \beta_i} )_{i=1,\dots,\mu}$ is 
  \begin{align}\label{eq:normalf}
  \text{NF}(g)= \sum_{i=1}^{\mu} \Lambda_i^{\bm 0}[ g]\, \xb^{\bm \beta_i} .
  \end{align}
\end{lemma}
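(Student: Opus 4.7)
The plan is to prove both claims simultaneously by directly computing how each $\Lambda_i$ acts on an arbitrary polynomial, exploiting the fact that the $\lambda_{ij}$ are exactly the coefficients that describe reduction modulo $\Qc_{\bm 0}$ in the basis $\Bc$. The key bookkeeping observation is that, by the characterization $\text{supp}\,\DDD_{\bm 0}=\{\bm\alpha\,:\,\xb^{\bm\alpha}\notin\Qc_{\bm 0}\}$ recalled just before the statement, the monomials indexing nonzero coefficients of the $\Lambda_i$ partition as $\{\bm\beta_i\}_{i=1}^{\mu}\cup\{\bm\gamma_j\}_{j=1}^{s-\mu}$, while every other monomial lies in $\Qc_{\bm 0}$ and therefore reduces to $0$.

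First I would expand an arbitrary $g=\sum_{\bm\alpha} g_{\bm\alpha}\xb^{\bm\alpha}\in R$ and compute the normal form monomial-by-monomial, using the three cases: $\xb^{\bm\beta_k}$ is its own normal form, $\xb^{\bm\gamma_j}$ reduces to $\sum_i\lambda_{ij}\xb^{\bm\beta_i}$ by definition of the $\lambda_{ij}$, and any remaining monomial lies in $\Qc_{\bm 0}$ and contributes $0$. Regrouping by $\xb^{\bm\beta_i}$, the coefficient of $\xb^{\bm\beta_i}$ in $\text{NF}(g)$ is exactly $g_{\bm\beta_i}+\sum_j\lambda_{ij}g_{\bm\gamma_j}$. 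On the other hand, using the duality $\db^{\bm\alpha}_{\bm 0}[\xb^{\bm\alpha'}]=\delta_{\bm\alpha,\bm\alpha'}$, one reads off directly from \eqref{cbasis} that $\Lambda_i^{\bm 0}[g]=g_{\bm\beta_i}+\sum_j\lambda_{ij}g_{\bm\gamma_j}$. Identifying the two expressions yields \eqref{eq:normalf}.

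Next I would deduce the basis claim from this formula. Specializing $g$ to an element of $\Qc_{\bm 0}$ gives $\text{NF}(g)=0$, whence $\Lambda_i^{\bm 0}[g]=0$ for all $i$; this shows $\Lambda_i\in \Qc_{\bm 0}^{\perp}=\DDD_{\bm 0}$. Linear independence is immediate by testing a relation $\sum_k c_k \Lambda_k=0$ against $\xb^{\bm\beta_i}$, which isolates $c_i=0$ thanks to the triangular shape of \eqref{cbasis} relative to $\Bc$. Since $\dim\DDD_{\bm 0}=\mu_{\bm 0}=\#\Bc=\mu$, the $\mu$ linearly independent elements $\Lambda_1,\dots,\Lambda_\mu$ form a basis of $\DDD_{\bm 0}$.

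The only subtle point, and the step that deserves care rather than difficulty, is the justification that every monomial $\xb^{\bm\alpha}$ not appearing among the $\xb^{\bm\beta_i}$ or $\xb^{\bm\gamma_j}$ actually belongs to $\Qc_{\bm 0}$; this is exactly the support characterization $\text{supp}\,\DDD_{\bm 0}=\{\bm\alpha:\xb^{\bm\alpha}\notin\Qc_{\bm 0}\}$ established just before the lemma, together with the fact that $\{\bm\beta_i\}\cup\{\bm\gamma_j\}$ exhausts $\text{supp}\,\DDD_{\bm 0}$ by construction. Once this is in hand, the computation of the normal form has no residual terms and both conclusions drop out of a single manipulation.
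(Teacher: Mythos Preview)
Your proposal is correct and follows essentially the same approach as the paper: both verify the normal-form identity monomial by monomial via the three cases $\xb^{\bm\beta_k}$, $\xb^{\bm\gamma_j}$, and $\xb^{\bm\alpha}\in\Qc_{\bm 0}$ (using the support characterization for the last), extend by linearity, and then deduce that the $\Lambda_i$ form a basis from linear independence and the dimension count. Your write-up is simply more explicit about the coefficient bookkeeping and defers the linear-independence check to the end, but the argument is the same.
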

\begin{proof}
  First note that the elements of $\DD$ are linearly independent.
  Now, by construction, $\sum_{i=1}^\mu \Lambda_i^{\bm 0}[\xb^{\bm
    \alpha}]=\text{NF}(\xb^{\bm \alpha})$ for all $\xb^{\bm
    \alpha}\notin \Qc_{\bm 0}$, eg. $\text{NF}(\xb^{\bm \beta_i})=\xb^{\bm \beta_i}$.
  Also, for $\xb^{\bm \alpha} \in \Qc_{\bm 0}$,
  $\forall i, \, \Lambda_i^{\bm 0}(\xb^{\bm \alpha})=0$, since $\bm
  \alpha\notin \text{supp}\, \DD$. Thus the elements of $\DD$ compute
  $\text{NF}(\bm \cdot)$ on all monomials of $R$,
  and~(\ref{eq:normalf}) follows by li\-nea\-ri\-ty. We deduce that
  $\DD$ is a basis of $\DDD_{\bm 0}$, as in Def.~\ref{def:ds}.  \if 0
  Now for arbitrary $g(\bm x)\in R$ we write $g(x)= q(x) + NF(g)$,
  with $q(x)\in \Qc_{0}$ and $NF(g)\in R/\Qc_{\bm 0}$, say $NF(g)=
  \sum_{i=1}^{\mu}c_i \xb^{\beta_{i}}$.  Then we have
\begin{align}
  \sum_{i=1}^\mu \Lambda_i [g] &= 
    \sum_{i=1}^\mu \Lambda_i [q + NF(g)] &= \\
    \sum_{i=1}^\mu \Lambda_i [NF(g)] &= 
    \sum_{i=1}^\mu c_i \Lambda_i [\xb^{\beta_{i}}] = 
    NF(g)  .
\end{align}
\fi
\end{proof}
Computing the normal form of the border monomials of $\Bc$ via
\eqref{eq:normalf} also yields the border basis relations and the ope\-ra\-tors
of multiplication in the quotient $R/\Qc_{\bm 0}$ (see eg. \cite{irsea-em-07} for more properties).

If a graded monomial ordering is fixed and $\Bc=(\xb^{\bm \beta_i}
)_{i=1,..,\mu}$ is the corresponding standard basis of $R/\Qc_{\bm
  0}$, then $\bm d^{\bm \beta_i}$ is the leading term of \eqref{cbasis}
wrt the opposite ordering~\cite[Th.~3.1]{lvz08}.

Conversely, if we are given a basis $\DD$ of $\DDD_{\bm 0}$ whose
coefficient matrix in the dual monomials basis $(\db^{\bm \alpha})_{\bm
  \alpha\notin \Qc_{\bm 0} }$ is $D\in \RR^{\mu\times s}$, we can compute a
basis of $R/\Qc_{\bm 0}$ by choosing $\mu$
independent columns of $D$, say those indexed by ${\db}^{\bm
  \beta_i},\, i=1,\dots,\mu$ . If $G\in\RR^{\mu\times\mu}$ is the
(invertible) matrix formed by these columns, then $D':= G^{-1}D$, is
\begin{align}\label{eq:diagonalbasis}
D' = 
\bordermatrix[{[]}]{ 
           & \bm \beta_1&\cdots & \bm \beta_{\mu}  & \bm\gamma_{1}   &\cdots & \bm\gamma_{s-\mu}  \cr
\Lambda_1'  & 1          &       & 0               & \lambda_{1,1}&\cdots & \lambda_{1,s-\mu}\cr
\ \vdots   &            &\ddots &                 & \vdots      &       & \vdots         \cr
\Lambda_\mu' & 0          &       & 1               &\lambda_{\mu,1}&\cdots & \lambda_{\mu,s-\mu}  
  } ,
\end{align}
i.e. a basis of the form~\eqref{cbasis}.
Note that an arbitrary basis of $\DDD$ does not have the above
diagonal form, nor does it directly provide a basis for $R/ \Qc_{\bm 0}$.

For $t\in \NN$, $\DDD_{t}$ denotes the vector space of polynomials of
$\DDD$ of degree $\le t$. The Hilbert function $h:\NN \to \NN $ is
defined by $h(t)=\dim( \DDD_t),\, t\ge 0$, hence $h(0)=1$ and
$h(t)=\dim \DDD$ for $t\ge N$. The integer $h(1)-1=\text{corank}\,
J_{\bm f}$ is known as the \emph{breadth} of $\DDD$.

\section{\hspace{-.4cm}Computing local ring structure}\label{sec:dual}

The computation of a local basis, given a system and a point, is done
essentially by matrix-kernel computations, and consequently it can be
carried out numerically, even when the point or even the system is
inexact.  Throughout the section we suppose $\bm f\in R^m$ and $\bm\zeta\in
\RR^n$ with $\bm f(\bm \zeta)=0$.

Several matrix constructions have been proposed, that use different
conditions to identify the dual space as a null-space.
They are based on the \emph{stability property} of the dual basis:
\begin{align}\label{eq:stability}
\forall\, \Lambda\in\DDD_t , \ \ \pp{\partial_i}\Lambda \in\DDD_{t-1}  \ ,\ \ \ 
i=1,\dots,n  .
\end{align}
We list existing algorithms that compute dual-space bases:

\textbullet\ 
As pointed out in~\eqref{eq:derive}, an equivalent form
of~\eqref{eq:stability} is: $ \forall \Lambda\in\DDD_t , \Lambda [ x_i f_j] = \bm 0 $, $\forall i,j=1,\dots,n $. Macaulay's
method~\cite{mac1916}
uses it
to derive the algorithm that is
outlined in Sect.~\ref{sec:macaulay}. 

\textbullet\ 
  In \cite{Marinari:1995:GDM:220346.220368} they exploit~\eqref{eq:stability} by
  forming the matrix $D_i$ of the map $\ds \pp{\partial_i} \, :\, \RR[\bm \partial]_{t} \to \RR[\bm
  \partial]_{t-1} $ for all $i=1,\dots,n$ and some triangular decomposition of the
  differential polynomials in terms the differential variables.  This
  approach was used in \cite{Stetter:1996:AZC:236869.236919} to reduce the
  row dimension of Macaulay's matrix, but not the column dimension.
The closedness condition is also used in~\cite{zeng09} to identify a superset of $\text{supp}\,
  \DDD_{t+1}$. 

\textbullet\ 
  The \textit{integration method} in~\cite{Mourrain97} ``integrates'' elements of a
  basis of $\DDD_{t}$, and obtains \emph{a priori} knowledge of the form
  of elements in degree ${t+1}$ (Sect.~\ref{sec:integration}). 

All methods are incremental, in the sense that they start by setting
$\DD_0=(\bm 1)$ and continue by computing $\DD_i,\
i=1,\dots,N,N+1$. When $\#\DD_N=\#\DD_{N+1}$ then $\DD_{N}$ is a basis
of $\DDD$, and $N$ is the nilindex of $\Qc$.

We shall review two of these approaches to compute a basis for $\DDD$, and
then describe an improvement, that allows simultaneous computation of a
quotient ring basis while avoiding redundant computations.

\subsection{Macaulay's dialytic matrices}\label{sec:macaulay}

This matrix construction is presented in~\cite[Ch.~4]{mac1916}, a
mo\-dern introduction is contained in~\cite{zeng05}, together with an
implementation of the method in
ApaTools\footnote{http://www.neiu.edu/~zzeng/apatools.htm}.

The idea behind the algorithm is the following: An e\-le\-ment of $\DDD$
is of the form
$\ds
  \Lambda(\pb) = \sum_{|\bm \alpha|\le N} \lambda_{\bm \alpha} \db^{\bm \alpha}
$
under the condition: $\Lambda^{\bm 0}$ evaluates to $0$ at any $g\in\id{\bm f}$, i.e.
$\Lambda^{\bm 0}(g)=\Lambda^{\bm 0}(\sum g_i f_i)=0 \iff \Lambda^{\bm 0}(x^{\bm\beta} f_i)=0$
for all monomials $x^{\bm\beta}$.
If we apply this condition recursively for $|\bm \alpha|\le N$ we get
a vector of coefficients $(\lambda_{\bm \alpha})_{{|\bm \alpha|\le
    N}}$ in the (right) kernel of the matrix with rows indexed by
constraints $ \Lambda^{\bm 0} [ \bm x^{\bm \beta} \bm f_i] = 0 $,
$|\bm \beta|\le N-1$.

Note that the only requirement is to be able to perform derivation of
the input equations and evaluation at $\bm \zeta= \bm 0$.

\begin{example} \label{ex:macaulay}
  Let $f_1= x_1 - x_2 + x_1^2$, $f_2= x_1 - x_2 + x_2^2$.  We also
  refer the reader to \cite[Ex.~2]{zeng05} for a detailed
  demonstration of Macaulay's method on the same instance.  The
  matrices in order $1$ and $2$ are:
{\small
  \[
  \bordermatrix[{[]}]{ & 1 & d_1 & d_2 \cr f_1 & 0&1&-1 \cr f_2 &
    0&1&-1 } \ \ , \ \ \bordermatrix[{[]}]{ & 1 & d_1 & d_2 & d_1^2 &
    d_1d_2 & d_2^2 \cr \ f_1 & 0&1&-1&1&0&0\cr \ f_2 & 0&1&-1&0&0&1\cr
    x_1f_1 & 0&0&0&1&-1&0\cr x_1f_2 & 0&0&0&1&-1&0\cr x_2f_1 &
    0&0&0&0&1&-1\cr x_2f_2 & 0&0&0&0&1&-1 } .
  \]
}
The kernel of the left matrix gives $\DD_1=(1,d_1+d_2)$. Expanding up
to order two, we get the matrix on the right, and $\DD_2=(1,d_1+d_2,
-d_1+d_1^2+d_1d_2+d_2^2)$. If we expand up to depth $3$ we get the
same null-space, thus $\DD=\DD_2$.
\end{example}

\subsection{Integration method}\label{sec:integration}

This method is presented in~\cite{Mourrain97}.
It is an evolution of Macaulay's method, in the sense that the
matrices are not indexed by all differentials, but just by elements
based on knowledge of the previous step. This performs a computation
adapted to the given input and results in smaller matrices.

For $\Lambda\in\RR[\bm \partial]$, we denote by $\sint_k \Lambda$ the element
$\Phi\in\RR[\bm \partial]$ with the
property $\pp{\partial_k}\Phi(\bm \partial)=\Lambda(\bm \partial)$ and with no constant
term wrt $\partial_k$.
\begin{theorem}[ {\cite[Th.~15]{Mourrain97}} ]\label{th:integrate}
  Let $\id{\Lambda_1,\Lambda_2, \dots,\Lambda_s}$ be a
  basis of $\DDD_{t-1}$, that is, the subspace of $\DDD$ of elements of
  order at most $t-1$.  
  An element $\Lambda\in \RR[\bm \partial]$ with no constant term lies in
  $\DDD_t$ iff it is of the form:
\begin{equation}\label{eq:candidate}
\Lambda(\bm \partial)= 
\sum_{i=1}^s  
\sum_{k=1}^n
\lambda_{ik} \sint_k \Lambda_{i}(\partial_1,\dots,\partial_k,0,\dots,0), 
\end{equation}
for $\lambda_{ik}\in\RR$, and the following two conditions hold:
\begin{itemize}
\item[(i)] 
  $\displaystyle \sum_{i=1}^s \lambda_{ik} \pp{\partial_l}
  \Lambda_{i}(\bm \partial) - \sum_{i=1}^s \lambda_{il} \pp{\partial_k}
  \Lambda_{i}(\bm \partial) =0$,\\ for all $1\le k\le l\le n$ .
\item[(ii)]
$\Lambda^{\bm \zeta} [f_k] =0$, for $k=1,\dots,n$ .
\end{itemize}
\end{theorem}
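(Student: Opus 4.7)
The plan is to reduce everything to an inductive stability characterization of $\DDD$: combining the stability property \eqref{eq:stability} with the derivation relation \eqref{eq:derive}, a polynomial $\Lambda\in\RR[\pb]$ of degree at most $t$ lies in $\DDD_t$ if and only if $\partial\Lambda/\partial\partial_i\in\DDD_{t-1}$ for every $i=1,\dots,n$ and $\Lambda^{\bm\zeta}[f_k]=0$ for every $k$. The theorem then becomes a translation of the first part of this criterion into the shape of the candidate \eqref{eq:candidate}, with condition (ii) handling the second part verbatim.

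\medskip
\noindent\emph{Forward direction.} Starting from $\Lambda\in\DDD_t$ with no constant term, I would invoke stability to expand each partial derivative in the given basis, writing $\partial\Lambda/\partial\partial_k=\sum_{i=1}^s\lambda_{ik}\Lambda_i$ for scalars $\lambda_{ik}\in\RR$. Writing $\Lambda^{(k)}(\pb)=\Lambda(\partial_1,\dots,\partial_k,0,\dots,0)$ and noting that $\Lambda^{(0)}=0$, I reconstruct $\Lambda$ by the telescoping identity
\[
\Lambda \;=\; \sum_{k=1}^n\big(\Lambda^{(k)}-\Lambda^{(k-1)}\big) \;=\; \sum_{k=1}^n \sint_k\!\Big(\tfrac{\partial\Lambda}{\partial\partial_k}\Big)^{(k)},
\]
and substituting the basis expansion gives exactly \eqref{eq:candidate}. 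Condition (i) is then forced by the Schwarz equality of mixed partials applied to $\Lambda$, while (ii) is immediate from $\Lambda\in\id{\fb}^\perp$.

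\medskip
\noindent\emph{Converse direction.} Assume $\Lambda$ has the form \eqref{eq:candidate} and that (i) and (ii) hold. A direct differentiation of \eqref{eq:candidate} with respect to $\partial_l$ produces
\[
\frac{\partial\Lambda}{\partial\partial_l} \;=\; \sum_i \lambda_{il}\Lambda_i^{(l)} \;+\; \sum_{k>l}\sum_i \lambda_{ik}\sint_k\!\Big(\tfrac{\partial\Lambda_i}{\partial\partial_l}\Big)^{(k)},
\]
and the goal is to identify this with $\sum_i\lambda_{il}\Lambda_i$. Applying the same telescoping identity to each $\Lambda_i$ supplies the missing pieces $\Lambda_i-\Lambda_i^{(l)}=\sum_{k>l}\sint_k(\partial\Lambda_i/\partial\partial_k)^{(k)}$; condition (i) swaps the pair $(k,l)$ inside the resulting double sum, making the two remainder terms coincide. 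Hence $\partial\Lambda/\partial\partial_l\in\DDD_{t-1}$, and together with (ii) the stability criterion forces $\Lambda\in\DDD_t$.

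\medskip
\noindent\emph{Main obstacle.} The delicate point is precisely this bookkeeping in the converse: verifying that the integrated ansatz \eqref{eq:candidate} differentiates back to the prescribed basis expansion. Conceptually this is a discrete Poincar\'e lemma on the lattice of differential variables: condition (i) expresses closedness of the ``$1$-form'' $\omega=\sum_k\!\big(\sum_i\lambda_{ik}\Lambda_i\big)\,d\partial_k$, and \eqref{eq:candidate} writes down its canonical primitive with respect to the order $\partial_1<\cdots<\partial_n$. Once this integrability identity is in hand, both implications reduce to the stability criterion recalled at the outset.
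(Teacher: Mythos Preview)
Your argument is correct and matches the paper's own treatment. The paper does not supply a full proof of this theorem (it is quoted from \cite{Mourrain97}); the only justification given is the sentence following the statement, namely that condition~(i) is equivalent to $\pp{\partial_k}\Lambda\in\DDD_{t-1}$ for all $k$, so that (i)--(ii) together encode stability under derivation plus vanishing on $\id{\fb}$. Your proof is precisely a detailed unpacking of that equivalence: the telescoping identity shows how the ansatz~\eqref{eq:candidate} arises from the basis expansions of the $\pp{\partial_k}\Lambda$, and the Poincar\'e-lemma style computation in your converse direction verifies that (i) is exactly the integrability condition needed to recover $\pp{\partial_l}\Lambda=\sum_i\lambda_{il}\Lambda_i$ from the integrated form. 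There is nothing to correct; you have simply filled in the details the paper delegates to the cited reference.
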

Condition $(i)$ is equivalent to $\pp{\partial_k}\Lambda \in\DDD_{t-1}$, for $1
\le k\le n$. Thus the two conditions express exactly the fact that
$\DDD$ must be stable by derivation and his members must vanish on
$\id{\bm f}$.

This gives the following algorithm to compute the dual basis:
Start with $\DD_0=\langle 1 \rangle$. Given a basis of $\DDD_{t-1}$ we
generate the $ns$ candidate elements $\sint_k
\Lambda_{i-1} (\partial_1,\dots,\partial_k,0,\dots,0)$. Conditions $(i)$ and
$(ii)$ give a linear system with unknowns $\lambda_{ik}$. The columns
of the corresponding matrix are indexed by the candidate elements.
Then the kernel of the matrix gives the possible solutions, which we
add to $\DD_{t-1}$ to obtain $\DD_t$.
If for some $t$ there are no further new elements, then $\DD=\DD_t$ is a basis of $\DDD$.

\begin{example}\label{ex:integration}
  Consider the instance of Ex.~\ref{ex:macaulay}. We have
  $f_1(\bm \zeta)=f_2(\bm \zeta)=0$, thus we
  set $\DD_0=\{1\}$. Equation~(\ref{eq:candidate}) gives $\Lambda=
  \lambda_1 d_{1} + \lambda_2 d_{2}$. Condition (i) induces no
  constraints and (ii) yields the system
{\small
\begin{align}\label{ex:integration:1}
\left[\begin{array}{cc}
1 & -1 \\
1 & -1 
\end{array}
\right]
\left[ \begin{array}{cc}\lambda_1\\ \lambda_2 \end{array}\right]
=0
\end{align}
}
where the columns are indexed by $d_1,d_2$.  We get
$\lambda_1=\lambda_2=1$ from the kernel of this matrix, thus
$\DD_1=\{1,d_1+d_2 \}$.

For the second step, we compute the elements of $\DD_2$, that must be
of the form
$
\Lambda= \lambda_1 d_1 + \lambda_2 d_2 + \lambda_3 d_1^2 + \lambda_4 (d_1d_2+d_2^2)$.
Condition (i) yields $\lambda_3 - \lambda_4 = 0$, and together with
(ii) we form the system
{\small
\begin{align}\label{ex:integration:2}
\left[\begin{array}{cccc}
0 &  0 & 1 & -1 \\
1 & -1 & 1 & 0 \\
1 & -1 & 0 & 1 
\end{array}
\right]
\left[ \begin{array}{cc}\lambda_1\\ \vdots \\ \lambda_4  \end{array}\right]
=0   ,
\end{align}
}
with columns indexed by $d_1, d_1^2, d_2, d_1d_2+d_2^2$.
We get two vectors in the kernel, the first yielding again $d_1+d_2$
and a second one for $\lambda_1=-1, \lambda_2=0, \lambda_3=
\lambda_4= 1$, so we deduce that $-d_1+d_1^2+d_1d_2+d_2^2$ is a new
element of $\DD_2$.

In the third step we have
\begin{align*}
\Lambda=& \lambda_1 d_1 + \lambda_2 d_2 + \lambda_3 d_1^2 +  \lambda_4(d_1d_2+d_2^2) +  \\
& \lambda_5 (d_1^3-d_1^2) + \lambda_6 (d_2^3+d_1d_2^2+d_1^2d_2 - d_1d_2) ,
\end{align*}
condition (i) leads to 
$\lambda_3 - \lambda_4 + (\lambda_5-\lambda_6)d_1+(\lambda_5-\lambda_6)d_2 = 0
$, and together with condition (ii) we arrive to 
{\small
\begin{align}\label{ex:integration:3}
\left[\begin{array}{cccccc}
0 & 0  & 0  & 0  & 1  & -1 \\
0 & 0  & 1  & -1  & 0 & 0 \\
1 & -1 & 1 & 0 & -1  & 0 \\
1 & -1 & 0  & 1 & 0 & 0 
\end{array}
\right]
\left[ \begin{array}{cc}\lambda_1\\ \vdots \\  \lambda_6  \end{array}\right]
=0 .
\end{align}
}
Since the kernel of this matrix gives elements that are already in
$\DD_2$, we derive that $\DD=\DD_2=\DD_3$ and the algorithm terminates.

Note that for this example Macaulay's method ends with a matrix of
size $12\times 10$, instead of $4\times 6$ in this
approach.
\end{example}

\subsection{Computing a primal-dual pair}

In this section we provide a process that allows simultaneous
computation of a basis pair $(\DD,\Bc)$ of $\DDD$ and $R/\Qc$.

Computing a basis of $\DDD$ degree by degree involves duplicated
computations. The successive spaces computed are $\DDD_1\subset \cdots
\subset \DDD_N = \DDD_{N+1}$.
It is more efficient to compute only new elements $\Lambda \in
\DDD_t $ which are independent in $\DDD_t / \DDD_{t-1}$ at step $t$. 

Also, once dual basis is computed, one has to transform it into the
form~\eqref{cbasis}, in order to identify a basis of $R/\Qc$ as
well. This transformation can be done \emph{a posteriori}, by finding
a sub-matrix of full rank and then performing Gauss-Jordan elimination
over this sub-matrix, to reach matrix
form~\eqref{eq:diagonalbasis}.

We introduce a condition (iii) extending Th.~\ref{th:integrate}, that
addresses these two issues: It allows the computation of a total of
$\mu$ independent elements throughout execution, and returns a
``triangular basis'', e.g. a basis of $R /\Qc$ is identified.

\begin{lemma}\label{lem:tbasis}
  Let $\DD_{t-1}=(\Lambda_1,\dots,\Lambda_k)$ be a basis of
  $\DDD_{t-1}$, whose coefficient matrix is 
{\small
\begin{align}\label{eq:trbasis}
\bordermatrix[{[]}]{ 
           & \bm \beta_1&\cdots & \bm \beta_{k}  & \gamma_{1}   &\cdots & \gamma_{s-k}  \cr
\Lambda_1  & 1          &   *   &     *           &  *     &\cdots & * \cr
\ \vdots   &       0    &\ddots &     *           & \vdots      &       & \vdots         \cr
\Lambda_k &  0      &  0    & 1               &    *      &\cdots &    *
  },
\end{align}
}
yielding the standard basis
  $\Bc=(\xb^{\bm\beta_{i}})_{i=1,\ldots,k}$.
 An element $\Lambda \in \RR[\bm \partial]$ is not zero 
in $\DDD_t/\DDD_{t-1}$ iff in addition to (i), (ii) of Th.~\ref{th:integrate} we impose:
  \begin{itemize}
  \item[(iii)] $\Lambda [\xb^{\bm \beta_i}] = 0$,\ $1\le i \le k $ .
\end{itemize}
\end{lemma}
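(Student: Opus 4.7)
The plan is to exploit the upper‑triangular, unit‑diagonal structure of the coefficient matrix~\eqref{eq:trbasis}: this structure turns condition (iii) into the statement that $\Lambda$ has no component along the already computed basis $\DD_{t-1}$. Conditions (i) and (ii) are inherited verbatim from Theorem~\ref{th:integrate} and guarantee $\Lambda\in\DDD_{t}$, so the only new content of the lemma is that (iii) characterizes a complement of $\DDD_{t-1}$ inside $\DDD_{t}$. Concretely, I will view the evaluation map
\[
\varepsilon\colon \DDD_{t-1}\longrightarrow \RR^{k},\qquad \Phi\longmapsto \bigl(\Phi[\xb^{\bm\beta_{1}}],\ldots,\Phi[\xb^{\bm\beta_{k}}]\bigr),
\]
and observe that in the basis $(\Lambda_{1},\ldots,\Lambda_{k})$ its matrix is precisely the left $k\times k$ block of~\eqref{eq:trbasis}, which is upper triangular with $1$'s on the diagonal, hence invertible.

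For the direction $(i)\wedge(ii)\wedge(iii)\Rightarrow \Lambda\not\equiv 0$ in $\DDD_{t}/\DDD_{t-1}$, I first use Theorem~\ref{th:integrate} to place $\Lambda\in\DDD_{t}$. Suppose $\Lambda\in\DDD_{t-1}$, so $\Lambda=\sum_{i=1}^{k}c_{i}\Lambda_{i}$. Reading the $j$-th column of~\eqref{eq:trbasis} gives
\[
\Lambda[\xb^{\bm\beta_{j}}]=c_{j}+\sum_{i<j}c_{i}\,\Lambda_{i}[\xb^{\bm\beta_{j}}],
\]
because the entries below the diagonal vanish and the diagonal entry equals $1$. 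Condition (iii) together with a short induction on $j$ then forces $c_{1}=c_{2}=\cdots=c_{k}=0$, i.e.\ $\Lambda=0$. Hence any nonzero $\Lambda$ satisfying the three conditions represents a nonzero class of $\DDD_{t}/\DDD_{t-1}$.

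For the converse, suppose $\Lambda\in\DDD_{t}$ represents a nonzero class; by Theorem~\ref{th:integrate} it already satisfies (i) and (ii), and I only need to produce a representative that also satisfies (iii). Since $\varepsilon$ is an isomorphism, there exists a unique $\Phi\in\DDD_{t-1}$ with $\varepsilon(\Phi)=\bigl(\Lambda[\xb^{\bm\beta_{1}}],\ldots,\Lambda[\xb^{\bm\beta_{k}}]\bigr)$. Setting $\Lambda':=\Lambda-\Phi$ keeps $\Lambda'\in\DDD_{t}$, does not change the class modulo $\DDD_{t-1}$, and yields $\Lambda'[\xb^{\bm\beta_{i}}]=0$ for $i=1,\ldots,k$. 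The class being nonzero guarantees $\Lambda'\neq 0$, so $\Lambda'$ is a witness satisfying (i), (ii) and (iii).

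The main obstacle is not technical but conceptual: one must read the lemma as an assertion about \emph{candidate parametrisations} in the incremental algorithm, rather than about arbitrary elements of $\RR[\bm\partial]$. Once the upper‑triangular‑unit‑diagonal structure of~\eqref{eq:trbasis} is isolated, both directions reduce to a triangular back‑substitution, and the algorithmic content becomes clear: appending (iii) to (i), (ii) yields a linear system whose kernel parametrises exactly a complement of $\DDD_{t-1}$ in $\DDD_{t}$, so every nonzero kernel vector produces a new basis element without any duplication of previous computations.
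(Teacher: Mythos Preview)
Your proof is correct and follows essentially the same approach as the paper: both exploit the upper-triangular, unit-diagonal structure of~\eqref{eq:trbasis} to show that a nonzero $\Lambda$ satisfying (i), (ii), (iii) cannot lie in $\DDD_{t-1}$. The paper argues via the minimal index $i_0$ with $\lambda_{i_0}\neq 0$, while you do the equivalent back-substitution induction; these are the same triangular argument phrased two ways. Your treatment is in fact more complete than the paper's, since you also spell out the converse direction (every nonzero class in $\DDD_t/\DDD_{t-1}$ has a representative satisfying (iii)) by using the invertibility of the evaluation map $\varepsilon$, whereas the paper's proof leaves that direction implicit in its final sentence.
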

\begin{proof}
  Let $\Lambda\in \RR[\bm \partial]$ be a non-zero functional satisfying
  (i), (ii) and (iii). Then $\Lambda\in \DDD_{t}$
  and $\Lambda [\xb^{\bm \beta_i}]= 0$ for $i=1,\ldots,k$. If $\Lambda\in
  \DDD_{t-1}$, then $\Lambda = \sum_{i=1}^{k} \, \lambda_{i}\, \Lambda_{i}$.
Take for $i_{0}$ the minimal $i$ such that $\lambda_{i}\neq 0$. Then $\Lambda
[\xb^{\bm\beta_{i_{0}}}]=\lambda_{i_{0}}$, which is in contradiction with condition (iii).
Thus, the non-zero solutions of (i), (ii) and (iii) correspond to the
elements which are not zero in $\DDD_t/\DDD_{t-1}$.
\end{proof}

The above constraint is easy to realize; it is equivalent to $\forall
i,\, \db_{\bm \zeta}^{\bm \beta_i}\notin \text{supp}\, \Lambda^{\bm
  \zeta}$, which implies adding a row (linear constraint) for every
$i$.  In many cases this constraint is just $\lambda_{ik}=0$ for some
$i,k$, thus we rather remove the column corresponding to
$\lambda_{ik}$ instead of adding a row. Either way, this lemma allows
to shrink the kernel of the matrix and compute only new dual elements.

Let us explore our running example, to demonstrate the essence of this
improvement.
\begin{example}\label{ex:improve}
We re-run Ex.~\ref{ex:integration} using Lem.~\ref{lem:tbasis}.

In the initialization step $\DD_{0}=(1)$ is already in triangular form
with respect to $\Bc_{0}=\{1\}$. For the first step, we demand
$\Lambda [1]=0$, thus the matrix is the same as
~\eqref{ex:integration:1}, yielding $\DD_1=(1,d_1+d_2)$.  We extend
$\Bc_{1}=\{1,x_{2}\}$, so that $\DD_1$ is triangular with respect to
$\Bc_{1}$.

In the second step we remove from~\eqref{ex:integration:2} the second
co\-lu\-mn, hence
{\small
\begin{align*}
\left[\begin{array}{ccc}
0 &  1 & -1 \\
1 &  1 & 0 \\
1 &  0 & 1 
\end{array}
\right]
\left[ \begin{array}{cc}\lambda_1\\ \lambda_3 \\ \lambda_4  \end{array}\right]
=0  ,
\end{align*}
}
yielding a single solution $-d_1+d_1^2+d_1d_2+d_2^2$. We extend $\Bc_{1}$ by
adding monomial $x_{1}$: $\Bc_{1}=\{1,x_{2},x_{1}\}$. 

For the final step, we search an element with $\Lambda [x_1] = \Lambda [x_2] = 0$ thus \eqref{ex:integration:3} loses two columns:
{\small
\begin{align*}
\left[\begin{array}{cccc}\label{ex:integration:2}
 0  & 0  & 1  & -1 \\
 1  & -1 & 0  & 0 \\
 1  & 0  & -1 & 0 \\
 0  & 1  & 0  & 0 
\end{array}
\right]
\left[ \begin{array}{cc}\lambda_3\\ \vdots \\  \lambda_6  \end{array}\right]
=0 .
\end{align*}
}
We find an empty kernel, thus we recover the triangular basis
$\DD=\DD_2$, which is then diagonalized to reach the form:
{\small
\[
\bordermatrix[{[]}]{ 
           &  1  & d_2  & d_1 & d_1^2 & d_1d_2 & d_2^2  \cr
\Lambda_1  &  1  & 0    & 0   & 0    & 0      & 0     \cr
\Lambda_2  &  0  & 1    & 0   & 1    & 1      & 1     \cr
\Lambda_3  &  0  & 0    & 1   &-1    & -1     & -1
  } .
\]
}
This diagonal basis is dual to the basis $\Bc=(1, x_2, x_1)$ of the
quotient ring and also provides a normal form algorithm
(Lem.~\ref{lem:nf}) wrt $\Bc$.
In the final step we generated a $4\times 4$ matrix, size smaller
compared to all previous methods.
\end{example}
This technique for computing $\Bc$ can be applied similarly 
to other the matrix methods, e.g. Macaulay's dialytic method.

If $h(t)-h(t-1)> 1$, ie. there are more than one elements in step $t$,
then the choice of monomials to add to $\Bc$ is obtained by extracting a non-zero
 maximal minor from the coefficient matrix in $(\bm d^{\bm \alpha})$.
In practice, we will look first at the monomials of smallest
degree.

\subsection{Approximate dual basis}
In our deflation method, we assume that the multiple point is known
approximately and we use implicitly Taylor's expansion of the
polynomials at this approximate point to deduce the dual basis,
applying the algorithm of the previous section. To handle safely the
numerical problems which may occur, we utilize the following techniques:

$\bullet$ At each step, the solutions of linear system~(\ref{eq:candidate}, i-iii)
are computed via Singular Value Decomposition.
Using a given threshold, we determine the numerical rank and an orthogonal
basis of the solutions from the last singular values and the last columns of
the right factor of the SVD.

$\bullet$ For the computation of the monomials which define the equations~(\ref{lem:tbasis}, iii)
 at the next step, we apply QR decomposition on the transpose of
 the basis to extract a non-zero maximal minor. The monomials indexing this
 minor are used to determine constraints~(\ref{eq:candidate}, i-iii).
A similar numerical technique is employed in \cite{zeng09}, for Macaulay's method.

\section{Deflation of a  singular point}\label{sec:deflate}
We consider a system of equations $\bm f=(f_{1},\ldots,f_{s})$, which has
a multiple point at $\xb=\bm\zeta$. Also, let
$\Bc=(b_{1},\ldots,b_{\mu})$ be a basis of $R / \Qc_{\bm \zeta}$ and
$\DD=(\Lambda_1,\ldots, \Lambda_{\mu})$ its dual counterpart, with $\Lambda_1=\bm 1$.

We introduce a new set of equations starting from $\bm f$, as follows: add
for every $f_i$ the polynomial $ g_k = f_k + p_k$, $p_k= \sum_{i=1}^\mu
\ee_{i,k}b_{i} $ where $\bm \ee_k= (\ee_{1,k},\dots,\ee_{\mu,k})$ is a new vector of $\mu$ variables.

Consider the system 
$$
\DD \bm g(\bm x,\bm \ee)= \Big( \Lambda_1(\pb_{\xb})[\gb], \dots, \Lambda_{\mu}(\pb_{\xb})[\gb]  \Big).
$$
where $\Lambda^{\xb}[g_k] =\Lambda_i(\pb_{\xb})[g_k]$ is defined as in \eqref{eq:defdual} with
$\bm \zeta$ replaced by $\xb$, ie. we differentiate $g_k$ but we do not evaluate at $\bm \zeta$.
This is a system of $\mu s$ equations, which we shall index $\DD\bm
g(\bm x,\bm \ee) =( g_{1,1}, \dots, g_{\mu,s} )$. We have
$$
g_{ik}(\xb,\bm \ee)= \Lambda_{i}^{\xb} [ f_{k} + p_k ] =
\Lambda_{i}^{\xb} [ f_{k} ]  + \Lambda_{i}^{\xb}[ p_k ]=
\Lambda_{i}^{\xb} [ f_{k} ]  + p_{ik}(\xb,\bm \ee).
$$
Notice that $p_{i,k}(\bm \zeta,\bm \ee)=\Lambda_{i}^{\bm \zeta}[p_k]=\ee_{i,k}$ because
$\DD=(\Lambda_{1},..,\Lambda_\mu)$ is dual to $\Bc$. 

As the first basis element of $\DD$ is $\bm 1$ (the evaluation
at the root), the first $s$ equations are $\bm g(\xb,\bm \varepsilon)=0$.

Note that this system is under-determined, since the number of
variables is $\mu \, s + n$ and the number of equations is $\mu s$.
We shall provide a systematic way to choose $n$ variables and purge
them (or better, set them equal to zero).

This way we arrive to a square system $\DD\bm{g}(\bm x,
\bm{\tilde\ee})$ (we use $\bm{\tilde \ee}$ for the remaining $\mu s 
- n$ variables) of size $\mu s \times \mu s$. We shall prove that this
system vanishes on $({\bm\zeta},\bm 0)$ and that $J_{\DD\bm{g}}(\bm \zeta,\bm
0) \neq 0$. 

By linearity of the Jacobian matrix we have
\begin{align*}
J_{\DD\bm{g}}(\bm x, \bm \ee) &=
J_{\DD\bm f}(\bm x,\bm \ee) +  J_{\DD\bm p}(\xb,\bm\ee)\\
&= [\, {J}_{\DD\bm f}(\bm x) \,|\,\bm 0\ \,] + 
[\, {J}_{\DD\bm p}^{\bm x}(\bm x,\bm\ee) \, |\, {J}_{\DD\bm p}^{\bm \ee}(\bm x,\bm\ee)\, ]  ,
\end{align*}
where ${J}^{\xb}_{\DD\bm p}(\bm x,\bm \ee)$ (resp. $J^{\bm
  \ee}_{\DD\bm p}(\xb,\bm\ee)$) is the Jacobian matrix of $\DD\bm p$
with respect to $\xb$ (resp. $\bm \ee$).
\begin{lemma}\label{jacp}
  The Jacobian $J^{\bm \ee}_{\DD\bm p}(\bm x,\bm\ee)$ of the linear system $\DD\bm
  p=(p_{1,1},\dots,p_{\mu , s})$ with
  $p_{i,k}(\bm\ee_k)=\Lambda_{i}^{\xb} [p_k](\bm x, \bm \ee_k)$ 
  evaluated at $(\bm x, \bm\ee)=(\zeta,\bm 0)$ is the
  identity matrix 
  in dimension $\mu s$.
\end{lemma}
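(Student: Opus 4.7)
The plan is to unfold $p_{i,k}$ explicitly using the definition of the perturbation and the linearity of each functional $\Lambda_i$, then read off the partial derivatives with respect to the $\ee$-variables directly. Since $p_k = \sum_{j=1}^\mu \ee_{j,k} b_j$, linearity gives
\[
p_{i,k}(\xb,\bm\ee_k) = \Lambda_i^{\xb}[p_k] = \sum_{j=1}^{\mu} \ee_{j,k}\, \Lambda_i^{\xb}[b_j] .
\]
This expression is linear in the $\ee$-variables, and the only $\ee$-variables occurring in $p_{i,k}$ are $\ee_{1,k},\dots,\ee_{\mu,k}$, i.e.\ those carrying the same second index $k$.

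From this I would compute the partial derivatives: $\partial p_{i,k}/\partial \ee_{j,l} = 0$ whenever $l\neq k$, and $\partial p_{i,k}/\partial \ee_{j,k} = \Lambda_i^{\xb}[b_j]$. Ordering the rows of $J^{\bm\ee}_{\DD\bm p}$ lexicographically by $(k,i)$ and the columns by $(l,j)$, this shows that $J^{\bm\ee}_{\DD\bm p}(\xb,\bm\ee)$ has a block-diagonal structure with $s$ diagonal blocks, each being the $\mu\times\mu$ matrix $M(\xb) = \bigl[\Lambda_i^{\xb}[b_j]\bigr]_{1\le i,j\le \mu}$ independent of $k$. Note the matrix does not depend on $\bm\ee$ at all, which is consistent with $\DD\bm p$ being linear in $\bm\ee$.

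It remains to evaluate $M(\xb)$ at $\xb=\bm\zeta$. By construction $\DD=(\Lambda_1,\dots,\Lambda_\mu)$ is the basis of $\DDD_{\bm\zeta}$ dual to the quotient basis $\Bc=(b_1,\dots,b_\mu)$, which means precisely that $\Lambda_i^{\bm\zeta}[b_j]=\delta_{ij}$. Hence $M(\bm\zeta)=I_\mu$, every diagonal block is the identity, and $J^{\bm\ee}_{\DD\bm p}(\bm\zeta,\bm 0)=I_{\mu s}$, as claimed.

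There is no real obstacle here; the only subtlety worth flagging is the ordering convention for the rows and columns of the Jacobian, since the block-diagonal form depends on grouping the $\mu$ equations $(g_{1,k},\dots,g_{\mu,k})$ that share a fixed polynomial index $k$ together with the $\mu$ perturbation parameters $\ee_{1,k},\dots,\ee_{\mu,k}$ attached to the same $f_k$. Once the indexing is fixed, everything follows from the two ingredients above: linearity of $\Lambda_i^{\xb}$ in its argument, and the defining duality $\Lambda_i^{\bm\zeta}[b_j]=\delta_{ij}$.
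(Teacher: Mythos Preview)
Your proof is correct and follows essentially the same approach as the paper: both arguments observe that $p_{i,k}$ involves only the variables $\bm\ee_k$, deduce a block-diagonal structure for $J^{\bm\ee}_{\DD\bm p}$, compute each block entry as $\Lambda_i^{\xb}[b_j]$ via $\partial p_k/\partial\ee_{j,k}=b_j$, and conclude using the duality $\Lambda_i^{\bm\zeta}[b_j]=\delta_{ij}$. Your version is slightly more explicit about the evaluation at $\xb=\bm\zeta$ and about the number of blocks being $s$, but the underlying argument is identical.
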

\begin{proof}[of Lemma~\ref{jacp}]
  First note that the system is block se\-pa\-ra\-ted, i.e. every
  $p_{ik}$ depends only on variables $\bm\ee_i$ and not on all
  variables $\bm \ee= (\bm\ee_1,\dots,\bm\ee_n)$. This shows that
  $J_{\bm p}^{\bm \ee}(\bm x,\bm\ee)$ is block diagonal,
{\small
  $$J^{\bm \ee}_{\DD\bm p}(\xb,\bm\ee) = 
  \left[
    \begin{array}{ccc}
      J_{1} &        & 0  \\
        & \ddots &    \\
      0 &        &  J_{\mu} 
    \end{array}
  \right].
  $$
}
  Now we claim that all these blocks are all equal to the identity
  matrix. To see this, consider their entry
  $ \partial_{\ee_{kj}}[p_{ik}]$ for $i,j=1, \dots, \mu $, which is
  $$
  \pp{\ee_{kj}}\Lambda_{i-1}^{\xb} [p_{k}]=  \Lambda_{i}^{\xb}[\pp{\ee_{jk}}p_{k} ]
  =\Lambda_{i}[ b_{j}] =
  \left\{\begin{array}{ll}1& ,i=j\\0& ,\text{otherwise} \end{array}\right., 
  $$
  since $\pp{\ee_{j,k}}p_{k}= \pp{\ee_{j,k}}(b_{j}\ee_{j,k}) = b_{j}$.
\end{proof}

\begin{lemma}\label{jacx}
  The $\mu s \times n$ Jacobian matrix $J_{\DD \bm f}(\bm x)$ of
  the system $\DD\bm f(\bm x)= ( f_1,\dots, f_{\mu n})$ 
  is of full rank $n$ at $\xb=\bm\zeta$.
\end{lemma}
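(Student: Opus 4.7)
The plan is to show that the only $\bm c \in \RR^n$ with $J_{\DD\bm f}(\bm\zeta)\bm c = \bm 0$ is $\bm c = \bm 0$, exploiting that $\bm\zeta$ is an isolated zero of $\bm f$.

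First I would identify the entries: since $\Lambda_i(\pb)$ is a constant-coefficient differential operator and $\Lambda_i^{\bm x}[f_k]$ is obtained by applying it to $f_k(\xb)$ and viewing the result as a function of $\bm x$, differentiation with respect to $x_j$ commutes with $\Lambda_i$, so the $(i,k;j)$-entry of $J_{\DD\bm f}$ at $\bm x = \bm\zeta$ equals $\Lambda_i^{\bm\zeta}[\partial_{x_j}f_k]$. Setting $D := \sum_j c_j\partial_{x_j}$, the kernel condition becomes $\Lambda_i^{\bm\zeta}[Df_k] = 0$ for all $i,k$, and since $(\Lambda_1, \ldots, \Lambda_\mu)$ is a basis of $\DDD_{\bm\zeta} = \Qc_{\bm\zeta}^{\perp}$, this forces $Df_k \in \Qc_{\bm\zeta}$ for every $k$.

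Next I would promote $D$ to a derivation on the local Artinian algebra $A := R/\Qc_{\bm\zeta}$. Because $\bm\zeta$ is isolated, $\Qc_{\bm\zeta}\hat R = \id{\bm f}\hat R$ in the completion $\hat R$ at $\mm_{\bm\zeta}$, so any $q \in \Qc_{\bm\zeta}$ admits a representation $q = \sum h_k f_k$ with $h_k \in \hat R$. Applying the Leibniz rule gives $Dq = \sum(Dh_k)f_k + \sum h_k (Df_k) \in \id{\bm f}\hat R = \Qc_{\bm\zeta}\hat R$, and the contraction $\Qc_{\bm\zeta}\hat R \cap R = \Qc_{\bm\zeta}$ (which holds because $R/\Qc_{\bm\zeta}$ is finite-dimensional, hence already complete) yields $Dq \in \Qc_{\bm\zeta}$. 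Hence $\bar D : A \to A$ is a well-defined $\RR$-linear derivation.

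Finally, assuming $\bm c \neq \bm 0$, I would pick $j_0$ with $c_{j_0} \neq 0$ and set $a := \overline{x_{j_0} - \zeta_{j_0}} \in A$, so that $\bar D(a) = c_{j_0}$. If $a = 0$ in $A$, then $0 = \bar D(0) = c_{j_0}$, a contradiction. Otherwise $a$ lies in the nilpotent maximal ideal of $A$; let $N \geq 1$ be minimal with $a^N = 0$. Then $0 = \bar D(a^N) = N\,c_{j_0}\,a^{N-1}$, and because $\mathrm{char}\,\RR = 0$ and $c_{j_0} \neq 0$ this forces $a^{N-1} = 0$, contradicting the minimality of $N$ (for $N = 1$ this reads $1 = 0$ in $A$, impossible since $\Qc_{\bm\zeta} \neq R$). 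Hence $\bm c = \bm 0$ and $J_{\DD\bm f}(\bm\zeta)$ has full column rank $n$. The most delicate step is the descent of $D$ to $A$: without the isolation of $\bm\zeta$ (via $\Qc_{\bm\zeta}\hat R = \id{\bm f}\hat R$) one only obtains $D(\id{\bm f}) \subset \Qc_{\bm\zeta}$, which is insufficient to iterate the Leibniz rule and run the nilpotency argument.
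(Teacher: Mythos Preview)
Your argument is correct, but it follows a different route from the paper's. Both proofs start from the same observation: a kernel vector $\bm c$ yields a first-order operator $D=\sum_j c_j\partial_j$ with $\Lambda_i^{\bm\zeta}[Df_k]=0$ for all $i,k$. From there the paper stays on the dual side: writing $\Delta=\sum_j c_j\partial_j\in\RR[\pb]$, the kernel condition reads $(\Delta\Lambda_i)^{\bm\zeta}[f_k]=0$; a Leibniz computation in $\RR[\pb]$ shows that $\langle\DDD,\Delta\DDD\rangle$ is stable under the derivations $\frac{d}{d\partial_k}$ and vanishes on $\bm f$, so by Proposition~\ref{prop:diffpol} one gets $\Delta\DDD\subset\DDD$, contradicting the degree bound on $\DDD$. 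You instead pass to the primal side: you use the equality $\Qc_{\bm\zeta}\hat R=\langle\bm f\rangle\hat R$ in the completion to show $D(\Qc_{\bm\zeta})\subset\Qc_{\bm\zeta}$, hence $D$ induces a derivation on the Artinian ring $A=R/\Qc_{\bm\zeta}$, and then a nilpotency argument on $a=\overline{x_{j_0}-\zeta_{j_0}}$ finishes.

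The two arguments are essentially dual to one another. The paper's version is lighter on prerequisites (it uses only the stability characterization already in place via Proposition~\ref{prop:diffpol}) and avoids completion entirely. Your version trades that for a standard commutative-algebra step (extension to $\hat R$, Leibniz there, and the contraction $\Qc_{\bm\zeta}\hat R\cap R=\Qc_{\bm\zeta}$), which is cleanly justified since $R/\Qc_{\bm\zeta}$ is finite-dimensional; the payoff is a transparent endgame via $\bar D(a^N)=Nc_{j_0}a^{N-1}$. Both rely in an essential way on $\bm\zeta$ being isolated---in the paper through Proposition~\ref{prop:diffpol}, in yours through $\Qc_{\bm\zeta}\hat R=\langle\bm f\rangle\hat R$.
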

\begin{proof}[of Lemma~\ref{jacx}]
  Suppose that the matrix is rank-deficient. Then there is a non-trivial
  vector in its kernel,
  $$
  J_{\DD\bm{f}}(\bm\zeta) \cdot \bm v = \bm 0  .
  $$
  The entries of $\bm v$ are indexed by $\partial_{i}$.  This implies that a
  non-zero differential $\Delta= v_1 \partial_1 + \cdots + v_n \partial_n $ of order
  one satisfies the following relations:
  $
  (\Delta\Lambda_{i})^{\bm\zeta}[f_j]= 0,  i=1,\ldots,\mu, j=1,\ldots,s.
  $
  By the standard derivation rules, we have
  $$
  \pp{\partial_k} (\Delta\Lambda_{i})= 
  v_k\Lambda_{i} + \Delta \pp{\partial_k}\Lambda_{i}  ,
  $$
  for $i=1,\ldots,\mu, ,k=1,\dots,n.$
  Since $\DDD$ is stable by derivation, $\pp{\partial_k}\Lambda_{i}\in
  \DDD$.
  We deduce that the vector space spanned by $\<\DDD,\Delta\DDD\>$ is stable
  by derivation and vanishes on $\bm f$ at $\bm \zeta$.
  By Proposition~\ref{prop:diffpol}, we deduce that $\Delta\DDD\subset
  \DDD$. This is a contradiction, since $\Delta$ is of degree $1$ and the
the elements in $\DDD$ are of degree $\le N$.
\end{proof}
The columns of $J_{\DD\bm g}(\bm x, \bm \ee)$ are indexed by the
variables $(\bm x, \bm \ee)$, while the rows are indexed by the
polynomials $g_{ik}$.  We construct the following systems:
\begin{itemize}
 \item[(a)] Let $\DD\fb^{I}$ be a subsystem of $\DD\fb$ such that the corresponding
$n$ rows of $J_{\DD\fb}(\bm\zeta)$ are linearly independent
   (Lem. \ref{jacx}). 
   We denote by $I=\{(i_1,k_{1}),\ldots,(i_n,k_{n})\}$ their indices.

 \item[(b)] Let ${\DD\tilde \gb}(\xb,\tilde{\bm\ee})$ be the square
   system formed by re\-mo\-ving the variables
   $\ee_{i_{1},k_{1}},\ldots,\ee_{i_{n},k_{n}}$ from ${\DD \gb}(\xb,{\bm\ee})$. 
   Therefore the Jacobian $J_{\DD\tilde \gb}(\xb,\tilde{\bm\ee})$ derives from 
   $J_{\DD \bm g}(\bm x,\bm\ee)$, after purging the columns indexed by
   $\ee_{i_{1},k_{1}},\ldots,\ee_{i_{n},k_{n}}$,
   and it's $(i_j,k_{j})$ row becomes $[\nabla
   (\Lambda_{i_{j}}^{\xb} \tilde g_{i_{j},k_{j}})^T |\ \bm 0\ ]$.
\end{itemize}

\begin{theorem}[Deflation Theorem 1] \label{deflation1}
  Let $\fb(\bm x)$ be a $n-$variate polynomial system with an $\mu -$fold
  isolated zero at $\xb=\bm \zeta$.
  Then the $n\times n$ system ${\DD\fb^{I}}(\xb)= 0$, defined in (a), has a simple root
  at $\xb={\bm\zeta}$.
\end{theorem}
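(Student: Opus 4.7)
The plan is to verify the two conditions for a simple root: (1) $\DD\fb^{I}$ vanishes at $\bm\zeta$, and (2) its $n\times n$ Jacobian at $\bm\zeta$ is invertible. Both follow almost directly from material already established in the excerpt.

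First I would handle vanishing. Each component of $\DD\fb^{I}$ is of the form $\Lambda_{i}^{\xb}[f_{k}]$ for some $(i,k)\in I$. Evaluating at $\xb=\bm\zeta$ gives $\Lambda_{i}^{\bm\zeta}[f_{k}]$. Since $\Lambda_{i}\in \DDD_{\bm\zeta}=\id{\bm f}^{\perp}\cap\RR[\pb_{\bm\zeta}]$ by Proposition~\ref{prop:diffpol}, and $f_{k}\in\id{\bm f}$, we get $\Lambda_{i}^{\bm\zeta}[f_{k}]=0$ for every $(i,k)$. Hence $\DD\fb^{I}(\bm\zeta)=\bm 0$.

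Next I would address non-singularity of the Jacobian. The Jacobian $J_{\DD\fb}(\xb)$ is the $\mu s\times n$ matrix whose $(i,k)$-row is the gradient with respect to $\xb$ of $\Lambda_{i}^{\xb}[f_{k}]$. By Lemma~\ref{jacx}, this matrix has full column rank $n$ at $\xb=\bm\zeta$, so we can extract $n$ linearly independent rows; this is precisely the choice of the index set $I$ in construction (a). The polynomial subsystem $\DD\fb^{I}$ is obtained by retaining only the rows indexed by $I$, and differentiation commutes with this selection: the Jacobian $J_{\DD\fb^{I}}(\xb)$ is exactly the $n\times n$ submatrix of $J_{\DD\fb}(\xb)$ formed by those $n$ rows. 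Therefore $J_{\DD\fb^{I}}(\bm\zeta)$ is an $n\times n$ matrix with linearly independent rows, hence invertible.

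Combining the two facts, $\bm\zeta$ is a zero of $\DD\fb^{I}$ at which the Jacobian is non-singular, which by definition means it is a simple root.

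There is no real obstacle in this proof: the work has been done in Lemma~\ref{jacx} (the rank statement) and Proposition~\ref{prop:diffpol} (the vanishing statement). The only point requiring a moment of care is to notice that selecting a subset of the coordinate equations defining $\DD\fb$ corresponds exactly to selecting the corresponding subset of rows in $J_{\DD\fb}$, so the full-rank property transfers intact to $\DD\fb^{I}$.
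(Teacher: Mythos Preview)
Your proof is correct and follows essentially the same route as the paper's own proof, which simply notes that $\bm\zeta$ is a solution ``by construction'' and that the index set $I$ is chosen precisely so that $\det J_{\DD\fb^{I}}(\bm\zeta)\neq 0$. You have just made explicit the two ingredients the paper leaves implicit: the vanishing comes from $\DD\subset\DDD_{\bm\zeta}$ (Proposition~\ref{prop:diffpol}), and the invertibility comes from Lemma~\ref{jacx} together with the definition of $I$ in construction~(a).
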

\begin{proof}
By construction, ${\bm\zeta}$ is a solution of
${\DD\fb^{I}}(\xb)=0$. Moreover, the indices $I$ are chosen such that
$
\det J_{\DD\fb^{I}}(\bm\zeta) \neq 0.
$
This shows that $\bm\zeta$ is a simple (thus isolated) root
  of the system ${\DD\fb^{I}}(\xb)= 0$.
\end{proof}
\begin{example}
  In our running example, we expand the rectangular Jacobian matrix of $6$
  polynomials in $(x_1,x_2)$. Choosing the rows corresponding to
  $f_1$ and $(d_1-d_2^2-d_1d_2-d_1^2)[f_1]$, we find a non-singular minor,
  hence the resulting system $(f_1, 2x_1)$ has a regular root at
  $\bm\zeta=(0,0)$.
\end{example}

The deflated system ${\DD\fb^{I}}(\xb)=0$ is a square system in $n$
variables. Contrarily to the deflation approach
in~\cite{lvz06,zeng05}, we do not need to introduce new variables here
and one step of deflation is sufficient.
In the following theorem, we do introduce new variables to express the
condition that the perturbed system has a given multiplicity
structure.
\begin{theorem}[Deflation Theorem 2] 
  \label{deflation2} Let $\fb(\bm x)$ be a $n-$variate polynomial
  system with an $\mu -$fold isolated zero at $\bm x=\bm \zeta$.
  The \emph{square} system ${\DD\tilde\gb}(\xb,\tilde{\bm\ee})=0$, as defined in (b), has
  a regular isolated root at $(\bm x,\bm{\tilde \ee})=(\bm\zeta,\bm
  0)$.
\end{theorem}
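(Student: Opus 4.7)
My plan is to establish two things: that $(\bm\zeta, \bm 0)$ is a root of $\DD\tilde\gb$, and that the Jacobian $J_{\DD\tilde\gb}(\bm\zeta, \bm 0)$ is invertible, which together amount to the claim.

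The vanishing step is immediate. Each $g_{ik}(\bm x, \bm\ee) = \Lambda_i^{\bm x}[f_k] + p_{ik}(\bm x, \bm\ee)$ satisfies $\Lambda_i^{\bm\zeta}[f_k] = 0$ because $\Lambda_i \in \DDD_{\bm\zeta} = \id{\bm f}^\perp \cap \RR[\pb_{\bm\zeta}]$ by Proposition~\ref{prop:diffpol}, while $p_{ik}(\bm\zeta, \bm 0) = 0$ since $p_k$ is linear in $\bm\ee_k$ with no constant term. This persists after deleting any subset of the $\bm\ee$-variables, so $\DD\tilde\gb(\bm\zeta, \bm 0) = \bm 0$ as well.

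The heart of the argument is a block-triangular decomposition of the Jacobian. Since $\DD\fb$ is $\bm\ee$-independent and $\DD\bm p$ is linear in $\bm\ee$ (so its $\bm x$-derivatives vanish at $\bm\ee = \bm 0$), the full Jacobian reduces to
$$J_{\DD\gb}(\bm\zeta, \bm 0) \;=\; \bigl[\, J_{\DD\fb}(\bm\zeta) \,\big|\, \mathrm{Id}_{\mu s} \,\bigr],$$
the right block being the identity by Lemma~\ref{jacp}. Striking the $n$ columns indexed by $I = \{(i_j, k_j)\}_{j=1}^n$ from this identity block and reordering rows so that those indexed by $I$ come first, I would exhibit the block-triangular form
$$J_{\DD\tilde\gb}(\bm\zeta, \bm 0) \;\sim\; \begin{bmatrix} J_{\DD\fb^I}(\bm\zeta) & \bm 0 \\ * & \mathrm{Id}_{\mu s - n} \end{bmatrix},$$
since the $1$'s on the rows indexed by $I$ sat exactly in the deleted columns, whereas the $1$'s on the remaining rows survive. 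The determinant equals $\pm \det J_{\DD\fb^I}(\bm\zeta)$, nonzero by the very choice of $I$ in step (a) (which is legitimate thanks to Lemma~\ref{jacx}). This shows $(\bm\zeta, \bm 0)$ is a simple, hence isolated, root of $\DD\tilde\gb$.

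The only real obstacle is keeping the indexing straight: one must verify that the row-index set $I$ used to pick a nonsingular $n \times n$ minor of $J_{\DD\fb}(\bm\zeta)$ is the same set whose $\bm\ee$-columns are purged when forming $\tilde\gb$, so that the upper-right block genuinely becomes zero. Once this alignment is made explicit, the block-triangular structure is forced and the determinant factors without any further computation.
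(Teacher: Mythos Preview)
Your proposal is correct and follows essentially the same approach as the paper's proof: both verify vanishing at $(\bm\zeta,\bm 0)$ directly, then exhibit (after a row permutation) the block-triangular structure $\begin{pmatrix} J_{\DD\fb^{I}}(\bm\zeta) & 0 \\ * & I \end{pmatrix}$ and conclude from $\det J_{\DD\fb^{I}}(\bm\zeta)\neq 0$. Your write-up is in fact more explicit than the paper's in justifying why the upper-right block vanishes (the deleted $\bm\ee$-columns are exactly those whose identity entries sat on the $I$-rows) and in invoking Lemmas~\ref{jacp} and~\ref{jacx} by name.
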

\begin{proof}
By definition of $\DD$, we have 
$$ 
{\DD\tilde\gb}(\bm\zeta,\bm 0)= (\Lambda^{\bm\zeta}_{1}[\fb], \ldots,\Lambda_{\mu}^{\bm\zeta}[\fb]) =0.
$$
Moreover, by construction of $\DD\tilde\gb$ we get, up to a row permutation, the determinant:
  \[
  \pm \text{det } J_{\DD\tilde\gb}(\bm \zeta,\bm 0)= \text{det}
  \left|\begin{array}{cc}
      J_1 & 0 \\
      J_2 & I 
    \end{array}\right| =
  \text{det } J_1 \ne 0 ,
  \] 
where $J_{1}=J_{\DD\fb^{I}}(\bm\zeta)$.
This shows that $(\bm\zeta,\bm 0)$ is regular and thus isolated
point of the algebraic variety defined by ${\DD\tilde{\gb}}(\xb,\tilde{\bm\ee})=0$. 
\end{proof}

Nevertheless, this deflation does differ from the deflation strategy
in~\cite{lvz06,zeng05}. There, new variables are added that
correspond to coefficients of differential elements,
 thus in\-tro\-du\-cing a perturbation in the approximate dual basis,
in case of approximate input. Hence the output concerns a deflated root of
the given approximate system.
In our  method, we perturb the equations, keeping an
approximate structure of a multiple point. Consequently, the
certification of a root concerns a nearby system, within controlled
error bounds, as it shall be described in Sect.~\ref{sec:verify}.

We mention that it would also be possible to use the equations
~(\ref{eq:candidate}, i-iii) to construct a deflated system on the
differentials and to perturb the approximate dual structure.

\section{\hspace{-.2cm}Verifying approximate singular points}\label{sec:verify}

In real-life applications it is common to work with approximate
inputs.  Also, there is the need to (numerically) decide if an
(approximate) system possesses a single (real) root in a given domain,
notably for use in subdivision-based algorithms,
eg.~\cite{mp09,mmt09}.

In the regular case, Smale's $\alpha-$theory, extending Newton's
method, can be used to answer this problem.  Another option is Rump's
Theorem, also based on Newton theory. In our implementation we choose
this latter approach, since it is suitable for inexact data and suits
best with the perturbation which is applied. Our perturbation coincides
to the numerical scheme of~\cite{RG10} in the univariate
case.

The certification test is based on the verification method of
Rump~\cite[Th.~2.1]{RG10}, which we rewrite in our setting:
\begin{theorem}[\cite{RG10} Rump's Theorem]\label{th:rump}
  Let $\fb\in R^n $ be a polynomial system and ${\bm\zeta}^*\in\mathbb R^n$ a real
  point.  Given an interval domain $Z\in \mathbb{IR}^n$ containing
  ${\bm\zeta}^*\in\mathbb R^n$, and an interval matrix
  $M\in\mathbb{IR}^{n\times n}$ whose $i-$th column $M_i$ satisfies $
  \nabla f_i( Z ) \subseteq M_i $ for $i=1\dots,n$, then the
  following holds:

If the inclusion
\begin{align}\label{inclusion}
V(\bm f,Z, \bm\zeta^* )=
- J_{\fb}({\bm\zeta}^*)^{-1} \fb({\bm\zeta}^*) +  
(I - J_{\fb}({\bm\zeta}^*)^{-1} M ) Z 
\subseteq   \stackrel{\circ}{ Z }
\end{align}
is true, then there is a unique $\bm\zeta \in Z$ with $\fb(\bm\zeta)=
\bm 0$ and the Jacobian matrix $J_{\fb}(\bm \zeta)\in M $ is non-singular.
\end{theorem}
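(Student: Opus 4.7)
The plan is to reinterpret the inclusion $V(\bm f, Z, \bm \zeta^*) \subseteq \stackrel{\circ}{Z}$ as a statement that a certain Newton-like map is a contracting self-map on $Z$, then invoke Brouwer's fixed point theorem to obtain existence, and finally use the interval mean-value estimate together with the inclusion to extract both non-singularity and uniqueness.

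First, I would define the simplified Newton operator $T(\bm y) = \bm y - J_{\bm f}(\bm\zeta^*)^{-1}\bm f(\bm y)$, which is well defined under the implicit assumption that $J_{\bm f}(\bm\zeta^*)$ is invertible (invertibility will be justified a posteriori). For any $\bm y \in Z$, I would apply the mean value theorem coordinate-wise: there exist points $\xi_i$ on the segment joining $\bm\zeta^*$ to $\bm y$ with $f_i(\bm y) - f_i(\bm\zeta^*) = \nabla f_i(\xi_i)^T (\bm y - \bm\zeta^*)$. Assembling these row by row gives $\bm f(\bm y) = \bm f(\bm\zeta^*) + \tilde M(\bm y)(\bm y - \bm\zeta^*)$ for some real matrix $\tilde M(\bm y)$ whose $i$-th column lies in $\nabla f_i(Z) \subseteq M_i$. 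Consequently
\[
T(\bm y) - \bm\zeta^* = -J_{\bm f}(\bm\zeta^*)^{-1}\bm f(\bm\zeta^*) + \bigl(I - J_{\bm f}(\bm\zeta^*)^{-1}\tilde M(\bm y)\bigr)(\bm y - \bm\zeta^*),
\]
which, by the inclusion principle of interval arithmetic and the fact that $\bm y - \bm\zeta^* \in Z - \bm\zeta^*$, implies $T(\bm y) \in V(\bm f, Z, \bm\zeta^*) \subseteq \stackrel{\circ}{Z}$. Thus $T$ maps the compact convex box $Z$ continuously into its interior.

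Next I would apply Brouwer's fixed point theorem to $T\colon Z \to Z$, yielding some $\bm\zeta \in Z$ with $T(\bm\zeta) = \bm\zeta$, i.e.\ $J_{\bm f}(\bm\zeta^*)^{-1}\bm f(\bm\zeta) = \bm 0$, hence $\bm f(\bm\zeta)=\bm 0$. For the non-singularity assertion, I would argue that the inclusion $(I - J_{\bm f}(\bm\zeta^*)^{-1}M)Z \subseteq \stackrel{\circ}{Z} - (\text{shift})$ forces every real matrix $B \in M$ to satisfy $\|I - J_{\bm f}(\bm\zeta^*)^{-1}B\| < 1$ in a suitable operator sense (for the box norm adapted to $Z$); consequently $J_{\bm f}(\bm\zeta^*)^{-1}B$ is invertible for every $B \in M$, so $B$ itself is invertible. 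Applying this to $B = J_{\bm f}(\bm\zeta) \in M$ gives the non-singularity of the Jacobian at the root, and applied to $B = J_{\bm f}(\bm\zeta^*)$ (since $\bm\zeta^* \in Z$) validates a posteriori the invertibility used to define $T$.

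Finally, for uniqueness, suppose $\bm\zeta_1, \bm\zeta_2 \in Z$ are both roots of $\bm f$. The mean-value construction above gives a matrix $\bar M \in M$ with $\bm 0 = \bm f(\bm\zeta_2) - \bm f(\bm\zeta_1) = \bar M(\bm\zeta_2 - \bm\zeta_1)$; since every element of $M$ is non-singular by the previous step, we conclude $\bm\zeta_1 = \bm\zeta_2$. The main technical obstacle is the second step: translating the strict interval inclusion into a quantitative contraction estimate (equivalently, bounding the spectral properties of $I - J_{\bm f}(\bm\zeta^*)^{-1} B$ uniformly over $B \in M$) and leveraging that to rule out singular matrices in $M$; everything else follows once the inclusion is correctly read in the language of operator self-maps.
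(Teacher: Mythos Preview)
The paper does not supply its own proof of this statement: Theorem~\ref{th:rump} is quoted verbatim from \cite{RG10} (it is labeled ``Rump's Theorem'' and attributed to \cite[Th.~2.1]{RG10}) and is used as a black box in the verification step. There is therefore no in-paper argument to compare your proposal against.

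That said, your sketch follows the standard route for Krawczyk--Rump type enclosure theorems: interpret the interval expression as an enclosure of a simplified Newton map, invoke Brouwer on the resulting self-map of the box, and derive uniqueness and Jacobian non-singularity from the strict inclusion. Two technical points are worth tightening. First, watch the bookkeeping between $Z$ and $Z-\bm\zeta^*$: the displayed operator $V(\bm f,Z,\bm\zeta^*)$ in the statement is written with $Z$, whereas your mean-value computation produces $T(\bm y)-\bm\zeta^*$ in terms of $\bm y-\bm\zeta^*$; the standard formulation takes $Z$ to be a box of deviations about $\bm\zeta^*$ (equivalently, one checks $\bm\zeta^*+V\subseteq\mathrm{int}(\bm\zeta^*+Z)$), and you should make that normalization explicit. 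Second, the step you flag as the main obstacle---deducing from the strict inclusion that every real matrix $B\in M$ is nonsingular---does not reduce to a single operator-norm bound $\|I-J_{\bm f}(\bm\zeta^*)^{-1}B\|<1$. The usual argument shows instead that the interval product $(I-J_{\bm f}(\bm\zeta^*)^{-1}M)\,Z$ being strictly inside $Z$ forces, componentwise, the interval matrix $I-J_{\bm f}(\bm\zeta^*)^{-1}M$ to have magnitude (entrywise absolute-value hull) with spectral radius below $1$ relative to the box's half-widths; from this one concludes that $J_{\bm f}(\bm\zeta^*)^{-1}B$ is a perturbation of the identity with spectral radius of the perturbation below $1$, hence invertible. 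Your outline is correct in spirit, but that step needs the interval-arithmetic formulation rather than a bare norm inequality.
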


  This theorem is applied on the perturbed system. If the test
  succeeds, we also get a domain for $\bm \varepsilon-$variables that
  reflects the distance of the approximate system from a precise
  system with the computed local structure.

\begin{example}
  We start with an approximate system: $f_1= 1.071 x_1-1.069 x_2+1.018
  x_1^2,\, f_2= 1.024 x_1-1.016 x_2+1.058 x_2^2$ and the domain:
  $Z=[-.01, .03]\times [-.03, .01]$. The Jacobian at $\bm x=(0,0)$
  evaluates to $.00652$, hence it is close to singular.

  We set the tolerance equal to $.04$, i.e. the size of the domain, and
  we consider the center of the box as our ap\-pro\-xi\-ma\-te point,
  $\bm\zeta^* =(.01,-.01)$.

First we compute approximate multiplicity structure at $\bm\zeta^*$,
$\DD=(
1, \,
d_2+1.00016 d_2^2+.99542 d_1d_2+1.03023 d_1^2, \,
d_1-1.00492 d_2^2-1.00016 d_1d_2-1.03514 d_1^2
)$
as well as $(1,x_2,x_1)$, a standard basis for the quotient. 
The latter indicates to perturb up to linear terms.

Now apply the second deflation theorem~\ref{deflation2} to get the $6\times 6$ system
$
\bm g= (
1.018\,x_{{1}}^{2}+ 1.071\,x_{{1}}+ \left( \varepsilon_{12} - 1.069\right) x_{{2}}, \,
\varepsilon_{12} - .02023,\,
.01723+ 2.036\,x_{{1}} ,\,
 1.058\,x_{{2}}^{2}
+(  1.024+ \varepsilon_{23} ) x_{{1}}
+ (  \varepsilon_{22} - 1.016 ) x_{{2}}
+ \varepsilon_{21},\,
.04217+ 2.116\,x_{{2}}+ \varepsilon_{22}, \,
\varepsilon_{23} - .03921 )
$,
which has a regular root for $\bm \zeta\in Z$ and parameters $(\varepsilon_{12},\varepsilon_{21},\varepsilon_{22},\varepsilon_{23})$. 
Indeed, applying
Theorem~\ref{th:rump} with $Z'=Z \times
[-.04,.04]^4$ and $(\bm\zeta^*,0,..,0)$ we get an inclusion 
$V(\bm g,Z' , \bm\zeta^* )\subseteq \stackrel{\circ}{Z'} $.
\end{example}

\section{\hspace{-.1cm}Geometry around a singularity}\label{applications}

As a final step in analyzing isolated singularities, we show how the
local basis can be used to compute the topological degree around the
singular point. If the latter is a self-intersection point of a
real algebraic curve, one can deduce the number of curve branches that
pass through it.

\smallskip\noindent{}\textbf{Topological degree computation.}
Let $\fb(\bm x)$ be a square $n-$variate system with an $\mu -$fold
isolated zero at $\xb=\bm \zeta$.
To a linear form $\Lambda\in \mathbb R[\pb_{\bm \zeta}]$, we associate the
quadratic form
 \begin{align} 
   Q_\Lambda \ : \ R/\Qc \times R/\Qc \to \mathbb R  \label{eq:qform} \ \ \ , \ \ \ 
   (b_i, b_j ) \mapsto  \Lambda (b_i b_j) 
 \end{align}
for $R/\Qc=\id{b_1,\dots,b_\mu}$.
 The signature of this (symmetric and bi-linear) form is
the sum of signs of the diagonal entries of any diagonal matrix
representation of it.
 
\begin{proposition}[{\cite[Th.~1.2]{el77}}] \label{th:tdeg}
  If $Q_\Phi$, $\Phi\in \DD$ is any bi-linear symmetric form such
  that $\Phi^{\bm \zeta} [\det J_{\fb}(\xb)] >0$, then
  \begin{align}
  \text{tdeg}_{\bm \zeta}(\fb) = sgn(Q_{\Phi}).
  \end{align}
\end{proposition}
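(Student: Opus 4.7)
The plan is to deduce Proposition \ref{th:tdeg} from the Eisenbud--Levine--Khimshiashvili theorem \cite{el77} by translating between the differential-operator description of $\DDD_{\bm\zeta}$ used in this paper and the linear dual of the local algebra $A = R/\Qc_{\bm\zeta}$ used in the cited source. The heart of the argument is just to verify that the hypothesis and conclusion in \cite[Th.~1.2]{el77} match ours.

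First I would identify $\DD$ with $A^{*}$. By Proposition \ref{prop:diffpol}, every $\Phi \in \DD$ annihilates $\Qc_{\bm\zeta}$, so the rule $\bar\Phi : [g] \mapsto \Phi^{\bm\zeta}[g]$ is a well-defined linear functional on $A$. Lemma \ref{lem:nf} makes the pairing between $\DD$ and the standard basis $\Bc$ perfect, so every element of $A^{*}$ arises in this form, and the symmetric bilinear form $Q_\Phi$ in (\ref{eq:qform}) coincides with $(a,b)\mapsto \bar\Phi(a\cdot b)$ on $A$.

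Next I would recall the algebraic input to \cite{el77}. Because $\fb$ is a square $n$-variate system with an isolated zero at $\bm\zeta$, the Artinian local $\RR$-algebra $A$ is a complete intersection; its socle $\text{Ann}_{A}(\mm_{\bm\zeta}/\Qc_{\bm\zeta})$ is therefore one-dimensional and is generated by the class $J \in A$ of the Jacobian determinant $\det J_{\fb}(\xb)$. The hypothesis $\Phi^{\bm\zeta}[\det J_{\fb}] > 0$ is then precisely the statement that $\bar\Phi$ evaluates strictly positively on the socle generator $J$; in particular $\bar\Phi$ is non-zero on the socle, so $Q_\Phi$ is non-degenerate. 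The Eisenbud--Levine--Khimshiashvili theorem now asserts that for every such $\bar\Phi$ on $A$, the signature of the induced symmetric bilinear form $Q_\Phi$ equals the local topological degree $\text{tdeg}_{\bm\zeta}(\fb)$ of $\fb$ at $\bm\zeta$, which is the claim.

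The main obstacle is essentially bookkeeping: one has to verify that the duality conventions and the $1/\bm\alpha!$ normalization built into $\db^{\bm\alpha} = \pb^{\bm\alpha}/\bm\alpha!$ are consistent with the dual pairing used in \cite{el77}, so that the sign of $\bar\Phi(J)$ is intrinsic and does not depend on the chosen basis of $A$. Once this compatibility is checked, no further computation is required; the proposition is a direct restatement of the Eisenbud--Levine theorem in the language of dual bases developed in Sections \ref{results}--\ref{sec:dual}.
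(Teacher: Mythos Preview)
Your proposal is correct and is in fact more than the paper itself provides: the paper does not give a proof of this proposition at all, it simply states it with the attribution \cite[Th.~1.2]{el77} and remarks afterwards that the signature is independent of the form chosen. Your write-up supplies precisely the bridge the citation presupposes, namely the identification of $\DDD_{\bm\zeta}$ with $A^{*}$ via Lemma~\ref{lem:nf} and the observation that $\Phi^{\bm\zeta}[\det J_{\fb}]>0$ is the Eisenbud--Levine positivity condition on the socle generator, so there is nothing to compare beyond noting that you have made explicit what the paper leaves implicit.
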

This signature is independent of the bi-linear form used.

We can use this result to compute the topological degree at $\xb =\bm \zeta$
using the dual structure at $\bm \zeta$. Since a basis $\DD$ is available we set
$\Phi=\pm \Lambda_i$, for some basis element that is not zero on
$\det J_{\fb}(\xb)$. Indeed, such an element can be retrieved among the basis
elements, since $\det J_{\fb}\notin \id{\bm f}$, see \cite[Ch.~0]{irsea-em-07}.

In practice it suffices to generate a random element of $\DDD$,
compute it's matrix representation 
$[\Phi(b_ib_j)]_{ij}$,
 and then extract the signature of $Q_{\Phi}$.

\smallskip\noindent{}\textbf{Branches around a singularity.}
In the context of computing with real algebraic curves, the
identification of self-intersection points is only the first step of
determining the local topology.  As a second step, one needs to
calculate the number of branches attached to the singular point $\bm \zeta$,
hereafter denoted $\text{Br}(\bm f,\bm \zeta)$.  This information is
encoded in the topological degree.

An implicit curve in $n-$space is given by all points sa\-ti\-sfy\-ing $\bm f(\xb)=0$,
$\bm f=(f_1,\dots,f_{n-1})$.  Consider $p(\bm x)= (x_1-\zeta_1)^2
+\cdots + (x_n-\zeta_n)^2$, and $g(\bm x)= \det J_{(\bm f,p)}(\xb)$. 
Then~(\cite{sza99} and references therein):
\begin{align}
\text{Br}(\bm f,\bm \zeta) = 2\, \text{tdeg}_{\bm \zeta} (\bm f, g). 
\end{align}
This implies an algorithm for $\text{Br}(\bm f,\bm \zeta)$. 
First compute the primal-dual structure of $(\bm f, g)$ at $\bm
\zeta$ and then use Prop.~\ref{th:tdeg} to get $\text{tdeg}_{\bm \zeta} (\bm f, g)$.

\begin{example}\label{ex:branches}
  Consider the implicit curve $f(x,y)=0$, in $xy-$plane, with
  $f(x,y) = x^4+2 x^2 y^2+y^4+3 x^2 y-y^3$, that looks like
  this \includegraphics[scale=0.17]{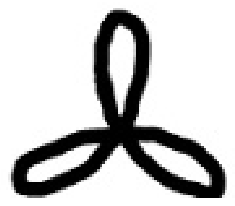}. We search
  for the number of branches touching $\bm \zeta= (0,0)$.

  We compute $g(x,y)=J_{(f,x^2+y^2)}= 18xy^2-6x^3$, and then the
  multiplicity structure of $(f,g)$ at $\bm \zeta$, and we arrive to
  the standard basis $\Bc=(1, y, x, y^2, xy, x^2, y^3, xy^2, x^2y)$.
  Among the $9$ elements of the dual basis, we find $\Phi= d_y^3+\frac
  3 8 d_y^4+ \frac 1 8 d_x^2 d_y^2+ \frac 3 8 d_x^4 $, having value
  $\Phi^{\bm 0}[ \det J_{(f,g)}(\bm x)]=54 > 0 $ on the Jacobian
  determinant.

  Using $\Bc$ and~\eqref{eq:normalf}, we get the $9\times 9$ matrix
  representation of $Q_{\Phi}$~\eqref{eq:qform} with $ij-$th entry
  $\Phi[ \text{NF}(\xb^{\bm \beta_i}\xb^{\bm \beta_j}) ]$, and we
  compute $\text{tdeg}_{\bm \zeta}\,(f,g)= \text{sgn}\, Q_{\Phi} = 3$,
  thus $\text{Br}(f,(0,0))=6$.
\end{example}
\section{Experimentation}\label{examples}

Our method is developed in \Maple. It
uses Mourrain's Integration technique to compute (approximate) dual
basis and derive the augmented system of Th.~\ref{deflation2}. 
Then Rump's method is used to verify the root.
Macaulay's method is also implemented for testing purposes.

\begin{example}
  Consider the system~\cite{lvz08} of $3$ equations in $2$ variables $f_1 =x_1^3
  +x_1 x_2^2 ,\, f_2 =x_1 x_2^2 + x_2^3 ,\, f_3 =x_1^2x_2 + x_1 x_2^2$,
  and the singular point $(0,0)$.

  Suppose that the point is given. Using~\ref{th:integrate}
  and~\ref{lem:tbasis} we derive the primal-dual pair $\DD= ({1},
  {d_1}, {d_2},{d_1^2}, {d_1d_2}, {d_2^2},
  \underline{d_2^3}+d_1^3+d_1^2d_2-d_1d_2^2)$, where $d_2^3$ is
  underlined to show that it corresponds to $x_2^3$ in the primal
  standard basis
  $\Bc$.  
  The biggest matrix used, in depth $4$, was of size $9 \times 8 $,
  while Macaulay's method terminates with a matrix of size $30\times
  15$.

  To deflate the root, we construct the augmented system $\DD\bm f$ of
  $21$ equations. The $21\times 2$ Jacobian matrix $J_{\DD\bm f}^{\bm
    x}$ is of rank $2$ and a full-rank minor consists of the rows
  $4$ and $5$. Consequently find the system $( d_1^2[f_1], d_1d_2[f_1]
  )=(3x_1,2x_2)$ which deflates $(0,0)$. Note that even though both
  equations of the deflated system derive from $f_1$, the functionals
  used on $f_1$ are computed using all initial equations.
\end{example}

\begin{example}
Let, as in~\cite{Lecerf02,lvz08}, 
$f_1 = 2 x_1 + 2 x_1^2 + 2 x_2 + 2x_2^2 + x_3^2 -1,\,
f_2 = ( x_1 + x_2 - x_3 -1 )^3 - x_1^3$, and 
$f_3 =   2 x_1^3 + 2x_2^2 + 10x_3 + 5x_3^2 + 5)^3 - 1000 x_1^5 $.

Point $(0,0,-1)$ occurs with multiplicity equal to $18$, in depth $7$.
The final matrix size with our method is $206\times 45$, while
Macaulay's method ends with a $360\times 165$ matrix.

If the objective is to deflate as efficiently as possible, then
one can go step by step: First compute a basis for $\DDD_1$ and stop
the process. We get the evaluation $\bm 1$ and $2$ first order
functionals, which we apply to $f_1$. We arrive to $(\bm
1[f_1],(d_2-d_1)[f_1]$, $(d_1+d_3)[f_1]) = (f_1, -4x_1+4x_2,
2+4x_1+2x_3 )$ and we check that the Jacobian determinant is $64$,
thus we have a deflated system 
only with a partial local
structure.
\end{example}
Table~\ref{tab:bench} shows computations on
the benchmark set of~\cite{zeng05}.  Multiplicity, matrix sizes at
termination step are reported.
\begin{table}[ht]\label{tab:bench} 
 \centering
\begin{tabular}{|l|c|c|c|}\hline
\bf Sys.&$\bm \mu$&\bf Integration  &\bf Macaulay \\ \hline\hline
cmbs1   & 11 &$33\times 23$ &$105 \times 56 $   \\ \hline
cmbs2   & 8  &$21 \times 17$&$60 \times 35 $   \\ \hline
mth191  & 4  &$10 \times 9$ &$30 \times 20 $   \\ \hline
decker2 & 4  &$5 \times  5$ &$20 \times 15 $   \\ \hline
Ojika2  & 2  &$6 \times  5$ &$12 \times 10 $   \\ \hline
Ojika3  & 4  &$24 \times 9 $&$60 \times 35 $   \\ \hline
KSS     & 16 &$569\times 69$&$630 \times 252 $   \\ \hline
\small 
Caprasse& 4  &$34 \times 13$&$60 \times 35 $   \\ \hline
Cyclic 9& 4  &$ 369 \times 33 $ &$495 \times 33 $   \\ \hline
DZ1     &131 &$1450 \times 524$ &$ 4004\times 1365$   \\ \hline
DZ2     & 7  &$73 \times 33 $ &$ 360\times 165 $    \\ \hline
DZ3     & 5  &$ 14\times 8 $ &$ 30 \times 21 $   \\  \hline
\end{tabular}
\caption{Benchmark systems from [3].}
\vspace{-.3cm}
\end{table}

 \smallskip\noindent{}\textbf{Acknowledgments.}
 This research has received funding from the
 EU's 7\textsuperscript{th} Framework Programme [FP7/2007-2013],
 Marie Curie ITN SAGA, grant n\textsuperscript{o}
 [PITN-GA-2008-214584].

\appendix
We attach additional examples that did not fit page limits.

\vspace{.2cm}

\textsc{Verification Example.}

  Let $f_1=(x_1^2x_2-x_1x_2^2, f_2= x_1-x_2^2)$. The verification
  method of~\cite{RG10} applies a linear perturbation on this system,
  but fails to certify the root $\bm x=(0,0)$.

  We consider an approximate point $\bm\zeta^* = (.01,\, .002)$ and we
  compute the approximate multiplicity structure:
  $$
  \DD= (\Lambda_1,\dots,\Lambda_4)=( 1.0, 1.0 d_2, \underline{1.0 d_1}+1.0 d_2^2,  \underline{1.0 d_1 d_2} +1.0 d_2^3 )
  $$
  The augmented system $\bm g(\bm x)=(\Lambda_j(f_i))=(f_1,\, 2.0 x_1x_2-1.0x_2^2-1.0x_1,\, 2.0x_1-2.0x_2,\, 1.0x_1-1.0x_2^2,\, f_2,\, -2.0x_2,\, 0., 0.)$
  has a Jacobian matrix:
 $$ J_{g}(\bm\zeta^*)^{T}= 
 \left[ \begin {array}{cccccccc}  
 .00& .016&- .99 & 2.0& 1.0&0&0&0\\ 
 .00&- .02& .016&- 2.0&-.004&- 2.0&0&0
 \end{array} 
 \right] 
 $$
 with a non-zero minor at the third and forth row. Using this information, we apply the following perturbation to the original system:
  \begin{align*}
  f_1 &= x_1^2x_2-x_1x_2^2 + \varepsilon_{11} + \varepsilon_{12} x_2 \\
  f_5 &= x_1-x_2^2  + \varepsilon_{21} + \varepsilon_{22} x_2 + \varepsilon_{23} x_1 + \varepsilon_{24} x_1x_2
  \end{align*}
  Thus $\bm g(x_1, x_2, \varepsilon_{11},\varepsilon_{12},
  \varepsilon_{21}, \varepsilon_{22}, \varepsilon_{23},
  \varepsilon_{24} )$, computed as before, is a square system with
  additional equations
  \begin{align*}
f_2 &= 1.0x_1^2-2.0x_1x_2+1.0\varepsilon_{12} \\
f_3 &= 2.0x_1x_2-1.0x_2^2-1.0x_1  \\
f_4 &= 2.0x_1-2.0x_2 \\
f_6 &= -2.0x_2+1.0\varepsilon_{22}+1.0x_1\varepsilon_{24} \\
f_7 &= 1.0\varepsilon_{23}+1.0x_2\varepsilon_{24} \\
f_8 &= 1.0 \varepsilon_{24}
\end{align*}

Now take the box $Z_1=[-.03, .05]\times [-.04, .04]\times [-.01,.01]^6$.
We apply Th.~\ref{th:rump} on $\bm g$, ie. we compute 
$V(\bm g,Z_1 , \bm\zeta^* )$.
For the variable $\varepsilon_{21}$ the interval is $[-.015,.15] \not\subseteq (-.01,.01) $, therefore we don't get an answer.

We shrink a little $Z_1$ down to $Z_2=[-.03, .05]\times [-.02, .02]\times [-.01,.01]^6$ and we apply again Th.~\ref{th:rump}, which results in
$$
V(\bm g,Z_2 , \bm\zeta^* ) =
\left[
\begin{array}{c}
{[-.004,.004]} \\
{[-.004,.004]}\\
{[-.001,.001]}\\
{[-.007,.007]}\\
{[-.006,.006]}\\
{[-.009,.009]}\\
{[-.00045, .00035]} \\
{[.0,.0]} 
\end{array}
\right] \subseteq  \stackrel{\circ} Z_2  ,
$$
thus we certify the multiple root of the original system inside $Z_2$.
\qed

\[ \small
\left[ 
\begin {array}{ccccccccc} 
1&y&x&{y}^{2}&xy&{x}^{2}&{y}^{3}&x{y}^{2}&{x}^{2}y\\
y&{y}^{2}&xy&{y}^{3}&x{y}^{2}&{x}^{2}y&\frac 3 8 {y}^{3}- \frac{9}{8} {x}^{2}y&0&\frac 1 8 {y}^{3}-\frac 3 8 x^{2}y \\ 
x&xy&{x}^{2}&x{y}^{2}&{x}^{2}y&3x{y}^{2}&0&1/8{y}^{3}-3/8{x}^{2}y&0\\ 
{y}^{2}&{y}^{3}&x{y}^{2} &\frac 3 8{y}^{3}-{\frac {9}{8}}{x}^{2}y&0&\frac 1 8{y}^{3}-\frac 3 8{x}^{2}y&0&0 &0\\ 
xy&x{y}^{2}&{x}^{2}y&0&\frac 1 8{y}^{3}-\frac 3 8{x}^{2}y&0&0&0&0 \\
{x}^{2}&{x}^{2}y&3x{y}^{2}& \frac 1 8{y}^{3}-3/8{x}^{2}y&0&3/8{y}^{3}-{\frac {9}{8}}{x}^{2}y&0&0&0\\ 
{y}^{3}&3/8{y}^{3}-{\frac{9}{8}}{x}^{2}y&0&0&0&0&0&0&0\\
x{y}^{2}&0&\frac 1 8{y}^{3}-\frac 3 8{x}^{2}y&0&0&0&0&0&0\\
{x}^{2}y&1/8{y}^{3}-\frac 3 8{x}^{2}y&0&0&0&0&0&0&0
\end{array} 
\right] 
\]\\
{\bf Multiplication table for $R/\id{f,g}$ (Example~\ref{ex:branches}). }

\pagebreak

\textsc{Example~\ref{ex:branches} (Cont'd).}

  Consider the implicit curve $f(x,y)=0$, in $xy-$plane, with
  $$
  f(x,y) = x^4+2 x^2 y^2+y^4+3 x^2 y-y^3, 
  $$ 
  that looks like
  this \includegraphics[scale=0.17]{figs/sing_curve.eps}. We search
  for the number of branches touching $\bm \zeta= (0,0)$.

  This point is of multiplicity $4$, as the dual basis we get for 
  $$
  f(x,y)=\pp{x}{f}(x,y)=\pp{y}{f}(x,y)=0
  $$ 
  is $(1, d_x, d_y, d_x^2+d_y^2)$, which provides no information for
  the number of branches.

  We compute 
  $$
  g(x,y)=J_{(f,x^2+y^2)}= 18xy^2-6x^3 , 
  $$ 
  and then the multiplicity structure of $(f,g)$ at $\bm \zeta$, and
  we arrive to
\begin{align*}
\DD= & ( 1,d_{{y}},d_{{x}},d_{{y}}^{2},d_{{x}}d_{{y}},d_{{x}}^{2},\underline{d_{{y}}
^{3}}+\frac 3 8\,d_{{y}}^{4}+\frac 1 8\,d_{{x}}^{2}d_{{y}}^{2}+\frac 3 8\,
d_{{x}}^{4}, \\
& \underline{d_{{x}}d_{{y}}^2}+3\,d_{{x}}^{3},\underline{ d_{{x}}^{2}d_{{y}} }-
\frac {9}{8}\,d_{{y}}^{4}-\frac 3 8\, d_{{x}}^{2}d_{{y}}^{2}-\frac {9
}{8}\,d_{{x}}^{4} ) ,
\end{align*}
  and the standard basis 
  $$
  \Bc=(1, y, x, y^2, xy, x^2, y^3, xy^2, x^2y) .
  $$
  Among the $9$ elements of the dual basis, we find 
  $$ 
  \Phi= d_y^3+\frac
  3 8 d_y^4+ \frac 1 8 d_x^2 d_y^2+ \frac 3 8 d_x^4 ,
  $$ 
  having value $\Phi^{\bm 0}[ \det J_{(f,g)}(\bm x)]=54 > 0 $ on the Jacobian
  determinant.

  Using $\Bc$ and~\eqref{eq:normalf}, we compute the matrix $\text{NF}(\xb^{\bm \beta_i}\xb^{\bm \beta_j})$ of multiplication  in $R/\id{f,g}$, given at the end of the page.
 Now a representation of $Q_{\Phi}$~\eqref{eq:qform} can be computed, 
 by applying $\Phi^{\bm 0}$ on the multiplication table to get:
\[ \small Q_{\Phi}=
 \left[ \begin {array}{ccccccccc} 0&0&0&0&0&0&1&0&0\\ 
0&0&0&1&0&0&3/8&0&1/8\\ 
0&0&0&0
&0&0&0&1/8&0\\ 
0&1&0&3/8&0&1/8&0&0&0\\ 
0&0&0&0&1/8&0&0&0&0\\ 
0&0&0&1/8
&0&3/8&0&0&0\\ 
1&3/8&0&0&0&0&0&0&0
\\ 0&0&1/8&0&0&0&0&0&0\\ 
0&1/8&0&0
&0&0&0&0&0\end {array} \right].
\]

With a QR iteration, we find 6 positive and 3 negative
eigenvalues of this representation, hence we compute 
$$\text{tdeg}_{\bm \zeta}\,(f,g)=
\text{sgn}\, Q_{\Phi} = 6 - 3 = 3 ,
$$ i.e. there are $6$ branches of the curve
around $(0,0)$.  \qed

\end{document}